\newcommand{\f}{\operatorname}
\newcommand{\R}{\mathbb{R}}
\theoremstyle{plain}
\newtheorem{theorem}{Theorem}[section]
\newtheorem{proposition}[theorem]{Proposition}
\theoremstyle{definition}
\newtheorem{definition}[theorem]{Definition}
\theoremstyle{remark}
\begin{document}

\articletype{Statistics: A Journal of Theoretical and Applied Statistics}

\title{On the posterior property of the Rician distribution}

\author{ Enrique Achire$^{\rm 1}$, Eduardo Ramos$^{\rm 2}$ and  Pedro Luiz Ramos$^{\rm 1}$$^{\ast}$\thanks{$^\ast$Corresponding author. Email: pedro.ramos@mat.uc.cl
\vspace{6pt}}, \\\vspace{6pt} 
\normalsize{$^{1}$ Facultad de Matemáticas, Pontificia Universidad Católica de Chile, \\ Santiago, Chile} \\  
\normalsize{$^{2}$ Institute of Mathematics and Computer Science, University of \\ S\~ao Paulo, S\~ao Carlos, Brazil}
}

\maketitle

\begin{abstract}
The Rician distribution, a well-known statistical distribution frequently encountered in fields like magnetic resonance imaging and wireless communications, is particularly useful for describing many real phenomena such as signal process data. In this paper, we introduce objective Bayesian inference for the Rician distribution parameters, specifically the Jeffreys rule and Jeffreys prior are derived. We proved that the obtained posterior for the first priors led to an improper posterior while the Jeffreys prior led to a proper distribution.
To evaluate the effectiveness of our proposed Bayesian estimation method, we perform extensive numerical simulations and compare the results with those obtained from traditional moment-based and maximum likelihood estimators. Our simulations illustrate that the Bayesian estimators derived from the Jeffreys prior provide nearly unbiased estimates, showcasing the advantages of our approach over classical techniques. Additionally, our framework incorporates the S.A.F.E. principles—Sustainable, Accurate, Fair, and Explainable—ensuring robustness, fairness, and transparency in predictive modeling.
\end{abstract}

\begin{keywords}
Objective prior;  Jeffreys prior; proper posterior; Rician distribution.
\end{keywords}

\section{Introduction}

The Rice distribution serves as an effective statistical model for capturing the magnitude of a complex-valued circular Gaussian random variable, also known as the Rician distribution. It involves two independent and identically distributed real-valued Gaussian variables with standard deviation $\alpha$, labeled $X_1$ and $X_2$, where
\begin{equation*}
M_1 = \eta \cos \phi + X_1, \quad
M_2 = \eta \sin \phi + X_2,
\end{equation*}
where, $\eta$ and $\phi$ represent two constant real values, and $\eta$ is greater than zero. Subsequently, we can express $X$ as the square root of the sum of the squares of $M_1$ and $M_2$. This transformed variable, $X$, conforms to a Rician distribution with parameters $\eta$ and $\alpha$, regardless of the specific value assigned to the angle parameter $\phi$.

This model was introduced by Rice \cite{rice1945mathematical} and has garnered significant interest for its adaptability in addressing various challenges within noisy data in image processing. To illustrate, a search conducted on the IEEE Xplore digital library in March 2024 using the term "Rician" yielded 6,423 research papers. Let us define the Rician probability density (over $\mathbb{R}^+$) of parameters $\eta > 0$, $\alpha > 0$ as
\begin{equation}\label{dens}
f(x ; \eta, \alpha) =
\frac{x}{\alpha^2} \exp\left( -\frac{x^2 + \eta^2}{2\alpha^2} \right) I_0\left( \frac{\eta x}{\alpha^2} \right),
\end{equation}
where $\eta$ is known as the noncentrality parameter and $I_{\nu }(x)$ is the modified bessel function
 \begin{equation}
I_{\nu }(x)=\sum _{m=0}^{\infty }{\frac {1}{m!\,\Gamma (m+\nu +1)}}\left({\frac {x}{2}}\right)^{2m+\nu } .
\end{equation}

Frequentist methods for inference with the Rician distribution are commonly employed in statistical literature. The moments estimator based on the second and fourth moments can be obtained in closed form and is widely used due to its simplicity. However, it has been noted that these estimators perform poorly or may even be undefined when dealing with low SNR and small sample sizes. In Gao et al. \cite{gao2010estimation} and Zhang \cite{zhang2019method}, improved moment estimators are introduced, some of which are based on the first and third moments to address these limitations. On the other hand, the maximum likelihood method results in a set of nonlinear equations that cannot be solved analytically, requiring the use of numerical methods. The challenge lies in the fact that the likelihood function has multiple maxima, making the optimization process highly sensitive to initial values. Talukdar et al. \cite{talukdar1991estimation} compared the moments estimator with maximum likelihood estimators (MLEs), showing that the latter provides better estimates than the standard estimator.

A Bayesian approach has been considered by Lauwers et al. \cite{lauwers2009estimating}, which assumes a power-prior for the parameters of the model. The proposed prior depends on a hyperparameter that should be selected prior to the study. The proposed approach suffers from two problems: firstly, the prior is only invariant to power transformations, which is undesirable for most applications of the model; secondly, the resulting prior is informative for one of the parameters and adds significant bias to the posterior estimates. They acknowledged this issue, noting that in many cases, their estimates were more biased than those obtained using the MLE.

A more effective alternative to subjective priors is to obtain priors following formal rules (Kass, \cite{kass1996selection}, Ramos et al., \cite{ramos2016efficient}). An important objective prior that is invariant to one-to-one transformations and usually leads to better inference is the Jeffreys \cite{jeffreys1946invariant} prior. However, it is essential to note that, in the same sense as the prior proposed by Lauwers et al. \cite{lauwers2009estimating}, the Jeffreys prior is improper, meaning they do not correspond to proper prior distributions and may yield improper posteriors and should not be used to obtain the parameter estimates (see, for instance, Ramos et al., \cite{ramos2023power}).
Northrop \& Attalides \cite{northrop2016} argued that there is no general theory providing straightforward conditions under which an improper prior lead to a proper posterior for a specific model, necessitating a case-by-case investigation.

In this paper, we address this issue by presenting the necessary and sufficient conditions required to verify if posterior distributions derived from objective priors are proper \cite{ferreira2020objective,ramos2021bayesian}. Moreover, there exists a possibility that, despite the posterior distribution being proper, the posterior moments of the parameters may remain infinite, for instance, the posterior mean. To address this, we propose sufficient guidelines to determine the boundedness of the posterior moments \cite{ramos2020posterior}. As a result, one can readily determine the propriety of the obtained posterior and whether its posterior moments are finite, by directly considering the behavior of the improper prior. More importantly, we demonstrate that the posterior obtained using our proposed Jeffreys prior is proper for all sample sizes greater than 2. The proposed posterior enables a correct objective Bayesian approach, and from a simulation study, we demonstrate that it generally outperforms other current estimation methods in terms of accuracy even for small sample sizes.

Furthermore, we incorporate predictive inference into our Bayesian framework by applying the S.A.F.E. principles—Sustainable, Accurate, Fair, and Explainable. These principles address issues like explainability, robustness, and bias in complex models, ensuring predictive models are precise, robust to uncertainties, fair across data groups, and transparent \cite{giudici2023safe,giudici2024safe}. In the Bayesian estimation of the Rician distribution for wireless systems, using non-informative priors such as the Jeffreys prior leads to proper posterior distributions and reliable parameter estimates. By generating posterior predictive distributions with methods like Metropolis-Hastings, we can account for parameter uncertainty, improving accuracy and explainability. This approach ensures models are robust, fair, and transparent, enhancing their reliability in critical applications.

The remainder of this paper is organized as follows. Section 2 revisits the Method of Moments and the standard MLEs, along with the asymptotic properties of the MLEs. Section 3 presents the main theorem that provides sufficient and necessary conditions for a general class of posteriors to be proper. Section 4 discusses the application of the main theorem to non-informative priors. In Section 5, a simulation study is conducted to identify the most efficient estimation procedure. Section 6 presents the predictive Bayesian approach, while in Section 7, the application in wireless communication is conducted. Finally, Section 8 summarizes the study.

\section{Classical Approach}

In this section, we revisit two useful estimation procedures used to obtain the estimates for $\eta$ and $\alpha$ of the Rician distribution.

\subsection{Moment Estimator}

The method of moments (MM) is one of the oldest techniques for estimating parameters in statistical models. For a random variable $X$ following the distribution $R(\alpha,\eta)$, we have $E(X^2) = \eta^2 + 2\alpha^2$ and $E(X^4) = \eta^4 + 8\eta^2\alpha^2 + 8\alpha^4$. 
After some algebraic manipulation, the moment estimators for $\eta$ and $\alpha$ are obtained as:
\begin{equation}\label{mme1}
\hat{\eta}_{MM}=\left[2\left(\frac{1}{n}\sum_{i=1}^n x_i^2\right)^2-\left(\frac{1}{n}\sum_{i=1}^n x_i^4\right)\right]^{1/4}, 
\end{equation}
\begin{equation}\label{mme2}
\hat{\alpha}_{MM}=\left[\frac{1}{2}\left(\frac{1}{n}\sum_{i=1}^n x_i^2-\hat{\eta}_{MM}^2\right)\right]^{1/2}.
\end{equation}

Although the MM estimators have a closed-form expression, their asymptotic properties—specifically for deriving confidence intervals—are not well understood. Moreover, they often exhibit significant bias in the estimates. As a result, alternative estimation methods, which will be discussed below, have been explored to address these limitations.

\subsection{Maximum Likelihood Estimator}

Let $X_1, X_2, \ldots, X_n$ be a iid random sample of size $n$ from a ${\rm R}(\alpha,\eta)$ population. The likelihood function for the density in (\ref{dens}) is given by
\begin{equation}\label{verori}
L(\alpha,\eta|\boldsymbol{x})=\prod_{i=1}^{n}\left[\left(\frac{x_i}{\alpha^2}\right)I_0\left( \frac{\eta x_i}{\alpha^2} \right)\right] \exp\left( -\sum_{i=1}^{n}\frac{x_i^2 + \eta^2}{2\alpha^2} \right).
\end{equation}
Taking the natural logarithm of the likelihood function (\ref{verori}), we obtain the log-likelihood function:
\begin{equation*}\label{logverogg1}
\ell(\alpha,\eta|\boldsymbol{x})= -2n\log(\alpha) + \sum_{i=1}^{n} \log(x_i) + \sum_{i=1}^{n} \log I_0\left( \frac{\eta x_i}{\alpha^2} \right) - \sum_{i=1}^{n}\frac{x_i^2 + \eta^2}{2\alpha^2}.
\end{equation*}

By taking the partial derivatives $\dfrac{\partial}{\partial \eta}\ell(\alpha,\eta|\boldsymbol{x}) = 0$ and $\dfrac{\partial}{\partial \alpha^2}\ell(\alpha,\eta|\boldsymbol{x}) = 0$, we arrive at the following nonlinear equations, respectively:
\begin{align}\label{eq:partial-eta}
    &\eta = \frac{1}{n} \sum_{i=1}^{n} x_i\frac{ I_1(\frac{x_i \eta}{ \alpha^2})}{I_0(\frac{x_i \eta}{ \alpha^2})},\\
    &2\alpha^2 + \eta^2 = \frac{1}{n}\sum_{i=1}^n x_i^2.
    \label{eq:partial-alpha}
\end{align}

By solving these equations numerically, we can obtain the maximum likelihood estimators $\left(\hat{\alpha}_{MLE}, \hat{\eta}_{MLE}\right)$ for $\left(\alpha,\eta\right)$.

The existence and uniqueness of the maximum likelihood estimators have been established by Carobbi and Cati \cite{carobbi2008absolute}. Consequently, the maximum likelihood estimators are asymptotically normally distributed, following a joint bivariate normal distribution:
\begin{equation*}
(\hat{\alpha}_{MLE},\hat{\eta}_{MLE})\sim N_2[(\alpha,\eta),I^{-1}(\alpha,\eta))] \quad \text{as} \quad n \to \infty,
\end{equation*}
where $I(\alpha,\eta)$ is the Fisher information matrix (see Idier and Collewet \cite{idier2014properties} for a detailed discussion):
\begin{equation}\label{mfishernk}
I(\alpha,\eta)=n
\begin{bmatrix}
 \frac{4}{\alpha^2} \left(\rho\Psi(\rho) - \rho + 1\right)  & \frac{2}{\alpha^2}  \sqrt{\rho}\left(1 - \Psi(\rho)\right) \\
\frac{2}{\alpha^2}  \sqrt{\rho}\left(1 - \Psi(\rho)\right) & \dfrac{\Psi(\rho)}{\alpha^2}
\end{bmatrix},
\end{equation}
where $\rho=\frac{\eta^2}{\alpha^2}$, and
\begin{equation}\label{phifunction}
\Psi(\rho) = \int_{0}^{\infty} \frac{y^3}{\rho^2} \exp\left(-\frac{y^2}{2\rho} - \frac{\rho}{2}\right) \frac{I_1^2(y)}{ I_0(y)} \, dy - \rho.
\end{equation}
The function $\Psi(\rho)$ is not a well-known integral and must be implemented in the software used for computing the Fisher information matrix.

\section{Bayesian Analysis}

In this section, sufficient and necessary conditions are presented for a general class of posterior to be proper posterior distributions. The joint posterior distribution for $\boldsymbol{\theta}$ is equal to the product of the likelihood function (\ref{verori}) and the prior distribution $\pi(\boldsymbol{\theta})$ divided by a normalizing constant $d(\boldsymbol{x})$, resulting in
\begin{equation}\label{posteriord1}
p(\boldsymbol{\theta|x})=\frac{\pi(\boldsymbol{\theta})}{d(\boldsymbol{x})}\prod_{i=1}^{n}\left(\frac{x_i}{\alpha^2}\right)I_0\left( \frac{\eta x_i}{\alpha^2} \right) \exp\left( -\sum_{i=1}^{n}\frac{x_i^2 + \eta^2}{2\alpha^2} \right),
\end{equation} 
where
\begin{equation}\label{posteriord2}
d(\boldsymbol{x})=\int\limits_{\mathcal{A}}\pi(\boldsymbol{\theta})\prod_{i=1}^{n}\left(\frac{x_i}{\alpha^2}\right)I_0\left( \frac{\eta x_i}{\alpha^2} \right) \exp\left( -\sum_{i=1}^{n}\frac{x_i^2 + \eta^2}{2\alpha^2} \right)d\boldsymbol{\theta},
\end{equation}
and $\mathcal{A}=\{(0,\infty)\times(0,\infty)\}$ is the parameter space of $\boldsymbol{\theta}$. Our purpose was to find an objective prior where the obtained posterior is proper, i.e., $d(\boldsymbol{x})<\infty$. 


Our goal is to identify the necessary and sufficient conditions that guarantee the posterior distribution (\ref{posteriord1}) is proper for a broad range of priors. To accomplish this, we will use the following propositions as tools. We use $\overline{\mathbb{R}}$ to refer to the extended real number line, which is the union of the real numbers and negative and positive infinity. We use a subscript $*$ to indicate that $0$ is excluded from the sets $\mathbb{R}$ and $\overline{\mathbb{R}}$.

\begin{definition}\label{definition0} Let $\f{a}:I\to\overline{\mathbb{R}}_*^+$ and $\f{b}:I\to\overline{\mathbb{R}}_*^+$, where $I\subset\mathbb{R}$. We say that $\f{a}(t)\propto \f{b}(t)$ if there exists $K_0\in \mathbb{R}^+_*$ and $K_1\in \mathbb{R}^+_*$ such that $K_0\f{b}(t) \leq \f{a}(t) \leq K_1\f{b}(t)$ for every $t\in I$.
\end{definition}

\begin{definition}\label{definition1}
Let $t_0\in \mathbb{\overline{R}}$, $\f{a}:I\to\mathbb{R^+_*}$ and $\f{b}:I\to\mathbb{R^+_*}$, where $I\subset\mathbb{R}$. We say that $\f{a}(t)\underset{t\to t_0}{\propto} \f{b}(t)$ if
\begin{equation*}
\liminf_{t\to t_0} \frac{\f{a}(t)}{\f{b}(t)} > 0\ \mbox{ and }\ \limsup_{t\to t_0} \frac{\f{a}(t)}{\f{b}(t)} < \infty  \,.
\end{equation*}
For $t_0\in \R$ we define $\f{a}(t)\underset{t\to t_0^+}{\propto} \f{b}(t)$ and $\f{a}(t)\underset{t\to t_0^-}{\propto} \f{b}(t)$ for $t_0\in \mathbb{R}$ analogously as above.
\end{definition}
Note that, from the above definition, if for some $K\in \mathbb{R}^+_*$ we have $\lim_{t\to t_0} \frac{\f{a}(t)}{\f{b}(t)} = K$, then it will follow that $\f{a}(t)\underset{t\to t_0}{\propto} \f{b}(t)$.

\begin{proposition}\label{proportional2} Let $\f{a}:(t_0,t_1)\to\mathbb{R^+_*}$ and $\f{b}:(t_0,t_1)\to\mathbb{R^+_*}$ be continuous functions in $(t_0,t_1)\subset\mathbb{R}$, where $t_0\in\overline{\mathbb{R}}$ and $t_1\in\overline{\mathbb{R}}$, and let $t^*\in(t_0,t_1)$. Then, if either $\f{a}(t)\underset{t\to t_0}{\propto} \f{b}(t)$ or $\f{a}(t)\underset{t\to t_1}{\propto} \f{b}(t)$, it will follow respectively that
\begin{equation*}
\int_{t_0}^{t^*} a(t)\; dt \propto \int_{t_0}^{t^*} b(t)\; dt\ \mbox{ or }\ \int_{t^*}^{t_1} a(t)\; dt \propto \int_{t^*}^{t_1} b(t)\; dt \,.
\end{equation*}
\end{proposition}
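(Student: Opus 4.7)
My plan is to convert the asymptotic proportionality at $t_0$ (given only as control on $\liminf$ and $\limsup$) into a genuine pointwise bound of the form $c_0\, b(t) \le a(t) \le C_0\, b(t)$ on a one-sided neighborhood of $t_0$, then integrate that bound and patch the result onto the remaining compact piece of the interval, on which both $a$ and $b$ behave nicely.

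First, assuming $a(t)\underset{t\to t_0}{\propto} b(t)$, I would take any constants $c_0,C_0\in\mathbb{R}^+_*$ with $c_0<\liminf_{t\to t_0} a(t)/b(t)$ and $C_0>\limsup_{t\to t_0} a(t)/b(t)$; such a choice is possible precisely because the liminf is strictly positive and the limsup is finite. By definition of liminf and limsup, there exists $t'\in(t_0,t^*)$ such that $c_0\, b(t)\le a(t)\le C_0\, b(t)$ for every $t$ between $t_0$ and $t'$. Monotonicity of the (improper) integral then yields
$$c_0\int_{t_0}^{t'} b(t)\,dt\ \le\ \int_{t_0}^{t'} a(t)\,dt\ \le\ C_0\int_{t_0}^{t'} b(t)\,dt,$$
so these ``singular-end'' integrals are simultaneously finite or infinite, and their ratio lies in $[c_0,C_0]$ whenever finite.

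Next, since $[t',t^*]$ is a compact subinterval of $(t_0,t_1)$ on which $a,b$ are continuous and strictly positive, the integrals $A':=\int_{t'}^{t^*} a\,dt$ and $B':=\int_{t'}^{t^*} b\,dt$ are finite and strictly positive. Denote by $A_0,B_0$ the integrals over $(t_0,t')$ and split $\int_{t_0}^{t^*}$ at $t'$. If $B_0=\infty$, then by the lower bound $A_0=\infty$ as well, so both total integrals equal $+\infty$ and Definition~\ref{definition0} holds trivially (using $K\cdot\infty=\infty$ for $K\in\mathbb{R}^+_*$). If $B_0<\infty$, then $A_0\le C_0B_0<\infty$, and combining $c_0B_0\le A_0\le C_0B_0$ with the finite positive constants $A',B'$ gives
$$\min\!\bigl(c_0,\,A'/B'\bigr)\ \le\ \frac{A_0+A'}{B_0+B'}\ \le\ \max\!\bigl(C_0,\,A'/B'\bigr),$$
with both bounds in $\mathbb{R}^+_*$, which is precisely the conclusion. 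The case $a(t)\underset{t\to t_1}{\propto} b(t)$ is entirely symmetric, with $t'$ chosen in $(t^*,t_1)$ instead.

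The only real technicality is the first step, namely turning the liminf/limsup hypotheses into a two-sided pointwise estimate on a one-sided neighborhood of $t_0$; once this is in place, the argument reduces to additivity of the integral and a short dichotomy on whether the tail at $t_0$ is finite. A minor bookkeeping point to watch is the convention for multiplying positive constants by $+\infty$, which is needed so that the divergent case still fits cleanly into Definition~\ref{definition0}.
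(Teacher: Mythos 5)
Your proof is correct. There is nothing in the paper to compare it against: Proposition \ref{proportional2} is stated as a tool and never proved there (the appendix proves only Theorem \ref{theoprinc}), so your argument supplies exactly the justification the paper leaves implicit. The three steps you rely on are all sound: (i) the $\liminf$/$\limsup$ hypotheses of Definition \ref{definition1} do yield constants $c_0, C_0\in\mathbb{R}^+_*$ and a point $t'$ with $c_0\, b(t)\le a(t)\le C_0\, b(t)$ on $(t_0,t')$, and this works equally well when $t_0=-\infty$; (ii) on the compact interval $[t',t^*]$, continuity and strict positivity of $a$ and $b$ make $A'=\int_{t'}^{t^*}a\,dt$ and $B'=\int_{t'}^{t^*}b\,dt$ finite and strictly positive, and likewise $A_0, B_0>0$ since the integrands are positive on an interval of positive length, so your mediant-type bound $\min(c_0,A'/B')\le (A_0+A')/(B_0+B')\le\max(C_0,A'/B')$ is legitimate; and (iii) the dichotomy on whether $B_0$ is finite is exhaustive, because the pointwise bounds force $A_0$ and $B_0$ to be simultaneously finite or infinite, and the divergent case is consistent with Definition \ref{definition0} (which permits values in $\overline{\mathbb{R}}^+_*$) under the convention $K\cdot\infty=\infty$. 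Your explicit attention to that convention, and to ruling out the mixed finite/infinite case, is precisely the bookkeeping needed for the proposition to be usable in the way the appendix uses it.
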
 

Additionally, given $\f{a}:I\to\overline{\mathbb{R}}_*^+$ and $\f{b}:I\to\overline{\mathbb{R}}_*^+$, where $I\subset\mathbb{R}$, we shall say $\f{a}(t)\lesssim \f{b}(t)$ if there exists $K\in \mathbb{R}^+_*$ such that $\f{a}(t) \leq K\f{b}(t)$ for every $t\in I$.

The following result describes general conditions on the prior $\pi(\boldsymbol{\theta})$ so that the posterior $p(\boldsymbol{\theta|x})$ in (\ref{posteriord1}) is proper. 

\begin{theorem}\label{theoprinc} Suppose $\pi(\boldsymbol{\theta})=\pi(\alpha)\pi(\eta)$ with $\pi(\alpha)>0$ and $\pi(\eta)>0$ then:
\begin{itemize}
\item[i)] If we have the proportionality
\begin{equation*}
\pi(\eta) \underset{\eta\to 0^+}{\propto} \eta^{r_0},
\end{equation*}
\noindent with $r_0\leq -1$ then the joint posterior (\ref{posteriord1}) is improper, that is $d(\boldsymbol{x})=\infty$.
\item[ii)] If we have $\pi(\alpha)\propto \alpha^k$ and the proportionalities
\begin{equation*}
\begin{aligned}
\pi(\eta) \underset{\eta\to 0^+}{\propto} \eta^{r_0}\mbox{ and } \pi(\eta) \underset{\eta\to \infty}{\propto} \eta^{r_{\infty}},
\end{aligned}
\end{equation*}
with $r_0>-1$, then the posterior (\ref{posteriord1}) will be proper for all $n>\max(2(r_\infty+1),k+1)$ as long as not all data $x_i$ are equal.
\end{itemize}
\end{theorem}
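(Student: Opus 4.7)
The plan is to bound $d(\boldsymbol x)$ from below in part (i) and from above in part (ii), using suitable estimates on $I_0$ combined with a completing-the-square identity that exposes a Gaussian structure in $\eta$.

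For part (i), I would restrict the integral defining $d(\boldsymbol x)$ to a box $[\alpha_1,\alpha_2]\times(0,\epsilon)$ with $0<\alpha_1<\alpha_2$ and $\epsilon>0$ fixed. On this region the joint likelihood is continuous and strictly positive (since $I_0\ge 1$ on $[0,\infty)$), so by compactness in $\alpha$ it is bounded below by a positive constant. Tonelli therefore gives $d(\boldsymbol x)\gtrsim \bigl(\int_{\alpha_1}^{\alpha_2}\pi(\alpha)\,d\alpha\bigr)\bigl(\int_0^\epsilon \pi(\eta)\,d\eta\bigr)$, and since $\pi(\eta)\underset{\eta\to 0^+}{\propto}\eta^{r_0}$ with $r_0\le -1$, Proposition~\ref{proportional2} yields $\int_0^\epsilon \pi(\eta)\,d\eta \propto \int_0^\epsilon \eta^{r_0}\,d\eta = \infty$.

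For part (ii), the main analytic input is the global bound $I_0(z)\le C_0\,e^{z}/\sqrt{1+z}$ on $[0,\infty)$, which follows from $I_0(0)=1$ and the asymptotic $I_0(z)\sim e^{z}/\sqrt{2\pi z}$ as $z\to\infty$. Applying it termwise, together with the identity $\sum x_i^2-2\eta\sum x_i+n\eta^2 = ns^2 + n(\eta-\bar x)^2$, where $\bar x=n^{-1}\sum x_i$ and $ns^2=\sum(x_i-\bar x)^2>0$ by the ``not all equal'' hypothesis, the likelihood satisfies
\begin{equation*}
L(\alpha,\eta\mid\boldsymbol x)\,\lesssim\, \alpha^{-2n}\exp\!\left(-\frac{ns^2}{2\alpha^2}\right)\exp\!\left(-\frac{n(\eta-\bar x)^2}{2\alpha^2}\right)\prod_{i=1}^n\left(1+\frac{\eta x_i}{\alpha^2}\right)^{-1/2}.
\end{equation*}
I integrate over $\eta$ first, splitting into $(0,1]$ and $[1,\infty)$. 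On $(0,1]$ the Gaussian and product factors are $\le 1$ and $\pi(\eta)\lesssim \eta^{r_0}$ with $r_0>-1$ is integrable, so this contribution is $O(1)$ in $\alpha$. On $[1,\infty)$, the elementary inequality $1+\eta x_i/\alpha^2\ge \eta x_i/\alpha^2$ gives $\prod(1+\eta x_i/\alpha^2)^{-1/2}\lesssim (\alpha^2/\eta)^{n/2}$; combined with $\pi(\eta)\lesssim \eta^{r_\infty}$ and the trivial bound on the Gaussian, this piece is at most $C\alpha^n\int_1^\infty \eta^{r_\infty-n/2}\,d\eta$, which is finite exactly when $n>2(r_\infty+1)$. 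Hence the $\eta$-marginal $J(\alpha)\lesssim 1+\alpha^n$.

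Substituting back and using $\pi(\alpha)\propto \alpha^k$ one arrives at
\begin{equation*}
d(\boldsymbol x)\,\lesssim\,\int_0^\infty\bigl(\alpha^{k-2n}+\alpha^{k-n}\bigr)\exp\!\left(-\frac{ns^2}{2\alpha^2}\right)d\alpha.
\end{equation*}
As $\alpha\to 0^+$ the exponential (using $s^2>0$) kills any polynomial blowup; as $\alpha\to\infty$ the dominant term is $\alpha^{k-n}$, integrable iff $n>k+1$. Combining both constraints yields the claimed range $n>\max(2(r_\infty+1),k+1)$. The main technical obstacle is choosing the right $I_0$ estimate: the crude $I_0(z)\le e^z$ produces no extra decay in $\eta$ and therefore cannot absorb the $\eta^{r_\infty}$ tail of $\pi(\eta)$, so the sharper $e^z/\sqrt{1+z}$ bound is essential to generate the $(\alpha^2/\eta)^{n/2}$ factor that makes the large-$\eta$ integral converge.
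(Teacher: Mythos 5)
Your proof is correct, and for part (ii) it takes a genuinely different, and in fact shorter, route than the paper's. Part (i) is essentially the paper's own argument: both rest on $I_0\geq 1$ and the divergence of $\int_0^\epsilon \eta^{r_0}\,d\eta$ for $r_0\leq -1$; the paper fixes $\alpha$ and shows the inner $\eta$-integral is infinite for every $\alpha>0$, while you restrict to a compact box in $\alpha$ and factor via Tonelli --- either implementation works. For part (ii), the paper substitutes $\alpha=\sqrt{\eta/\beta}$, splits the $(\beta,\eta)$ plane into three regions $s_1,s_2,s_3$, reduces each to gamma-type integrals by a further change of variables $\gamma=\beta\sum_{i}(x_i-\eta)^2/(2\eta)$, and must finish with a delicate argument showing that an upper-incomplete-gamma expression $T(\eta)$ extends continuously to $\eta=0$. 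You instead keep the original variables, use the single global bound $I_0(z)\lesssim e^z/\sqrt{1+z}$ (the same asymptotic content the paper encodes in its two local proportionalities for $I_0^*$), and complete the square so that the exponential factors as $\exp\left(-ns^2/(2\alpha^2)\right)\exp\left(-n(\eta-\bar{x})^2/(2\alpha^2)\right)$. That factorization is the key gain: the factor $\exp\left(-ns^2/(2\alpha^2)\right)$, which is nondegenerate precisely because not all $x_i$ are equal, gives superpolynomial decay as $\alpha\to 0^+$ uniformly in $\eta$, and this is exactly the coupling between small $\alpha$ and small $\eta$ that forces the paper into its $s_3$/incomplete-gamma analysis. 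Your route dispatches it in one line and makes transparent where each hypothesis enters: $r_0>-1$ controls $\eta$ near $0$, $n>2(r_\infty+1)$ controls $\eta$ at infinity (via the square-root decay in the $I_0$ bound), $s^2>0$ controls $\alpha$ near $0$, and $n>k+1$ controls $\alpha$ at infinity. One caveat applies equally to your proof and the paper's: the hypotheses constrain $\pi(\eta)$ only near $0$ and near $\infty$, so bounds such as $\int_0^1\pi(\eta)\,d\eta<\infty$ and $\int_1^\infty\pi(\eta)\eta^{-n/2}\,d\eta<\infty$ require the tacit regularity (continuity, hence local boundedness) of the prior that the paper also assumes when invoking its Proposition \ref{proportional2}.
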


\begin{proof}
The proof is available at Appendix A.
\end{proof}

The following follows directly from theorem \ref{theoprinc}.

\begin{theorem}\label{theosecond} Suppose $\pi(\boldsymbol{\theta})=\pi(\alpha)\pi(\eta)$ with $\pi(\alpha)>0$ and $\pi(\eta)>0$, suppose $\pi(\alpha)\propto \alpha^k$ and suppose
\begin{equation*}
\begin{aligned}
\pi(\eta) \underset{\eta\to 0^+}{\propto} \eta^{r_0}\mbox{ and } \pi(\eta) \underset{\eta\to \infty}{\propto} \eta^{r_{\infty}},
\end{aligned}
\end{equation*}
with $r_0>-2$, then the first moments relative to (\ref{posteriord1}) will be proper for all $n>\max(2(r_\infty+2),k+2)$ as long as not all data $x_i$ are equal.

\end{theorem}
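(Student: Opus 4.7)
The plan is to deduce Theorem~\ref{theosecond} directly from Theorem~\ref{theoprinc} by a reparametrization trick: finiteness of the first posterior moment of a parameter $\theta_j\in\{\alpha,\eta\}$ is equivalent to propriety of the posterior obtained by replacing the prior $\pi(\boldsymbol{\theta})$ with the tilted prior $\widetilde\pi(\boldsymbol{\theta}):=\theta_j\,\pi(\boldsymbol{\theta})$. Indeed, by Fubini,
\begin{equation*}
E[\theta_j\mid \boldsymbol{x}]=\frac{1}{d(\boldsymbol{x})}\int_{\mathcal{A}}\theta_j\,\pi(\boldsymbol{\theta})\,L(\alpha,\eta\mid \boldsymbol{x})\,d\boldsymbol{\theta},
\end{equation*}
so the moment is finite iff the numerator is finite, which is exactly the normalizing constant of the posterior generated by $\widetilde\pi$. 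Since $d(\boldsymbol{x})$ is itself finite under the hypotheses by Theorem~\ref{theoprinc} (note $r_0>-2$ in particular gives $r_0>-1$ once we also check $n$ is large enough), the problem reduces to verifying the conditions of Theorem~\ref{theoprinc}\,(ii) for each tilted prior.

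First I would handle the moment of $\eta$. Here the tilted prior factorizes as $\widetilde\pi(\alpha)\widetilde\pi(\eta)$ with $\widetilde\pi(\alpha)\propto \alpha^k$ unchanged and $\widetilde\pi(\eta)=\eta\,\pi(\eta)$, so that
\begin{equation*}
\widetilde\pi(\eta)\underset{\eta\to 0^+}{\propto}\eta^{r_0+1},\qquad \widetilde\pi(\eta)\underset{\eta\to\infty}{\propto}\eta^{r_\infty+1}.
\end{equation*}
Applying Theorem~\ref{theoprinc}\,(ii) with exponents $(r_0+1,r_\infty+1,k)$, propriety of the tilted posterior holds provided $r_0+1>-1$ (that is $r_0>-2$) and $n>\max\bigl(2((r_\infty+1)+1),\,k+1\bigr)=\max(2(r_\infty+2),k+1)$.

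Next I would handle the moment of $\alpha$, where the tilted prior has $\widetilde\pi(\eta)=\pi(\eta)$ unchanged and $\widetilde\pi(\alpha)\propto \alpha^{k+1}$. Applying Theorem~\ref{theoprinc}\,(ii) with exponents $(r_0,r_\infty,k+1)$, the tilted posterior is proper provided $r_0>-1$ (implied by $r_0>-2$ together with the other requirement) and $n>\max(2(r_\infty+1),(k+1)+1)=\max(2(r_\infty+1),k+2)$. Taking the intersection of the two conditions yields exactly the single constraint $n>\max(2(r_\infty+2),k+2)$ in the theorem, and the hypothesis that not all $x_i$ are equal is inherited from Theorem~\ref{theoprinc}.

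No step here looks to be a genuine obstacle, since the argument is purely a bookkeeping reduction to the already-proved Theorem~\ref{theoprinc}; the only delicate point to verify is that the asymptotic proportionalities in Definition~\ref{definition1} are preserved under multiplication by $\eta$ or $\alpha$, which is immediate from the definition since the ratio $\widetilde\pi(\eta)/\eta^{r_0+1}=\pi(\eta)/\eta^{r_0}$ has the same liminf and limsup as the original, and analogously for the endpoint $\eta\to\infty$ and for $\pi(\alpha)$.
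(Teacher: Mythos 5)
Your reduction by tilting the prior---replacing $\pi(\boldsymbol{\theta})$ by $\theta_j\,\pi(\boldsymbol{\theta})$ and invoking Theorem \ref{theoprinc}(ii)---is exactly the route the paper intends (the paper offers no written proof beyond the remark that the result ``follows directly from Theorem \ref{theoprinc}''), and your exponent bookkeeping for the $\eta$-moment is right: the tilted prior behaves as $\eta^{r_0+1}$ at $0$ and $\eta^{r_\infty+1}$ at $\infty$, so part (ii) applies precisely when $r_0>-2$ and $n>\max(2(r_\infty+2),k+1)$.

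However, there is a genuine gap in the two parenthetical claims you use to dispose of the condition $r_0>-1$. The assertions ``$r_0>-2$ in particular gives $r_0>-1$ once we also check $n$ is large enough'' and ``$r_0>-1$ (implied by $r_0>-2$ together with the other requirement)'' are false: $r_0$ is a fixed exponent, and no condition on $n$ can upgrade $-2<r_0\le -1$ to $r_0>-1$. This matters at two load-bearing points of your argument. First, for the base posterior: if $-2<r_0\le -1$, Theorem \ref{theoprinc}(i) gives $d(\boldsymbol{x})=\infty$, so the posterior (\ref{posteriord1}) is improper and posterior moments are not even well defined as ratios. Second, for the $\alpha$-moment: the tilted prior $\alpha^{k+1}\pi(\eta)$ still satisfies $\pi(\eta)\underset{\eta\to 0^+}{\propto}\eta^{r_0}$, so again by part (i) the numerator $\int_{\mathcal{A}}\alpha\,\pi(\boldsymbol{\theta})L(\alpha,\eta|\boldsymbol{x})\,d\boldsymbol{\theta}$ diverges whenever $r_0\le -1$. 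Hence your argument is valid only on the range $r_0>-1$; under the hypothesis $r_0>-2$ alone, what your computation actually establishes is finiteness of the unnormalized $\eta$-moment integral $\int_{\mathcal{A}}\eta\,\pi(\boldsymbol{\theta})L(\alpha,\eta|\boldsymbol{x})\,d\boldsymbol{\theta}$. (It is worth noting that the tilt by the product $\alpha\eta$, with exponents $(r_0+1,r_\infty+1,k+1)$, is the one that reproduces exactly the stated thresholds $r_0>-2$ and $n>\max(2(r_\infty+2),k+2)$, which suggests the statement was calibrated to that case; this is really a defect of the theorem as phrased.) The correct move is to flag the restriction explicitly---either strengthen the hypothesis to $r_0>-1$ when both first moments and propriety of (\ref{posteriord1}) are claimed, or phrase the conclusion in terms of the tilted integrals---rather than assert an implication that does not hold.
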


\section{Objective Priors}

Here, we applied our proposed theorem in some objective or non-informative priors.

\subsection{Power prior}

Jeffreys explored various approaches to construct non-informative priors. Initially, he examined scenarios where the parameter space fell within bounded intervals, such as $(-\infty,\infty)$ or $(0,\infty)$ (as detailed in Kass and Wasserman, \cite{kass1996selection}). In the first two instances, Jeffreys recommended the use of a constant prior. However, for the interval $(0,\infty)$, he employed a prior of the form $\pi(\theta)=\frac{1}{\theta}$. His primary rationale behind this choice stemmed from its invariance when subjected to power transformations of the parameters.

Given that the parameters of the Rice distribution fall within the interval $(0,\infty)$, applying Jeffreys's rule yields the following prior:
\begin{equation}\label{priorrej}
\pi\left(\alpha,\eta\right)\propto \frac{1}{\alpha\eta} \ .
\end{equation}

Lauwers et al. \cite{lauwers2009estimating} considered the same power invariance to propose an extension of the prior cited above, they proposed to assume a known parameter $\epsilon\in\mathbb{R}$ and the power prior given by
\begin{equation}\label{priorrej}
\pi_1\left(\alpha,\eta\right)\propto \frac{1}{\alpha^{1+\epsilon}\eta^{1-\epsilon}} \ .
\end{equation}

The joint posterior distribution for $\alpha$ and $\eta$, produced by the power prior, is proportional to the product of the likelihood function (\ref{verori}) and the prior (\ref{priorrej}) resulting in
\begin{equation}\label{postgpower} 
\begin{aligned}
p_1(\alpha,\eta|\boldsymbol{x})\propto \frac{1}{\eta^{1-\epsilon}\alpha^{2n+1+\epsilon}}\prod_{i=1}^{n}\left(x_iI_0\left( \frac{\eta x_i}{\alpha^2} \right)\right) \exp\left( -\sum_{i=1}^{n}\frac{x_i^2 + \eta^2}{2\alpha^2} \right).
\end{aligned}
\end{equation}

\begin{proposition}\label{prop1} The joint posterior (\ref{postgpower}) is improper when $\epsilon\leq 0$, i.e., $d_1(\boldsymbol{x})=\infty$, and proper when $\epsilon>0$ for all $n\geq 2\epsilon$ as long as not all $x_i$ are equal.
\end{proposition}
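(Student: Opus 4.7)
The plan is to deduce this proposition as a direct application of Theorem \ref{theoprinc}, since the power prior has the separable form required and its factors in $\alpha$ and $\eta$ are pure power functions, making the proportionality relations exact on all of $(0,\infty)$.

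First I would read off the exponents: writing $\pi_1(\alpha,\eta)=\pi(\alpha)\pi(\eta)$ with $\pi(\alpha)\propto \alpha^{-(1+\epsilon)}$ and $\pi(\eta)\propto \eta^{-(1-\epsilon)}=\eta^{\epsilon-1}$, we immediately have $k=-(1+\epsilon)$ and, since $\pi(\eta)$ is a single power of $\eta$ on $(0,\infty)$,
\begin{equation*}
\pi(\eta)\underset{\eta\to 0^+}{\propto}\eta^{r_0}\quad\text{and}\quad \pi(\eta)\underset{\eta\to\infty}{\propto}\eta^{r_\infty}\quad\text{with}\quad r_0=r_\infty=\epsilon-1.
\end{equation*}

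For the improper direction, note that $\epsilon\leq 0$ is equivalent to $r_0=\epsilon-1\leq -1$, so part (i) of Theorem \ref{theoprinc} applies and yields $d_1(\boldsymbol{x})=\infty$, i.e., the posterior \eqref{postgpower} is improper. This is a one-line invocation.

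For the proper direction, when $\epsilon>0$ we have $r_0=\epsilon-1>-1$, so part (ii) of Theorem \ref{theoprinc} gives propriety for all sample sizes $n$ satisfying
\begin{equation*}
n>\max\bigl(2(r_\infty+1),\,k+1\bigr)=\max(2\epsilon,\,-\epsilon)=2\epsilon,
\end{equation*}
where the last equality uses $\epsilon>0$. This recovers the stated threshold (matching $n\geq 2\epsilon$ in the proposition up to the strict/weak inequality, which in the context of the theorem's hypothesis is the strict version). The only caveat transferred from Theorem \ref{theoprinc} is that not all $x_i$ are equal, which is exactly the hypothesis included in the proposition.

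I do not anticipate a genuine obstacle here, since Theorem \ref{theoprinc} was designed precisely to absorb this kind of computation; the only thing to watch is correctly identifying the signs of the exponents in $\pi_1$ (the $\alpha$-factor carries $-(1+\epsilon)$, not $1+\epsilon$) and confirming that $\max(2\epsilon,-\epsilon)=2\epsilon$ under $\epsilon>0$, so that the $\eta$-behavior at infinity — not the $\alpha$-behavior — controls the sample size requirement.
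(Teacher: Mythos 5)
Your proposal is correct and follows essentially the same route as the paper's own proof: identify $k=-(1+\epsilon)$ and $r_0=r_\infty=\epsilon-1$, invoke item (i) of Theorem \ref{theoprinc} for $\epsilon\leq 0$ and item (ii) for $\epsilon>0$, and compute $\max(2\epsilon,-\epsilon)=2\epsilon$. Your remark on the strict-versus-weak inequality ($n>2\epsilon$ from the theorem versus $n\geq 2\epsilon$ in the proposition) is a fair observation; the paper's proof silently makes the same substitution, so your version is if anything slightly more careful.
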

\begin{proof} In this case we have $\pi(\boldsymbol{\theta})=\frac{1}{\eta^{1-\epsilon}\alpha^{1+\epsilon}}=\eta^{r}\alpha^{k}$ with $r=\epsilon-1$ and $k=-\epsilon-1$. Thus for $r_0=r_\infty=r$ it follows that $\epsilon\leq 0$ is equivalent to $r_0\leq -1$. Thus, from item $i)$ of Theorem \ref{theoprinc} it follows that the posterior will be improper for $\epsilon\leq 0$ and, in case $\epsilon>0$, it will be proper for all
\begin{equation*}n\geq \max(2(r_\infty+1),k+1)=\max(2\epsilon,-\epsilon)=2\epsilon
\end{equation*}
as long as not all $x_i$ are equal.
\end{proof}

Lauwers et al. \cite{lauwers2009estimating} further assumed that $\epsilon \geq 1$. However, they did not demonstrate that the obtained posterior is proper. A significant drawback is that as $\epsilon$ increases, it introduces additional bias into the posterior estimates. For instance, the authors assumed $\epsilon = 2$, resulting in an informative prior for one of the parameters and adding significant bias to the posterior estimates. They acknowledged this issue, noting that in many cases, their results were more biased than those obtained using the MLE.  To overcome this problem, another objective prior will be discussed below.

\subsection{Jeffreys prior}

In subsequent research, Jeffreys \cite{jeffreys1946invariant} introduced what he termed the "general rule," whereby a non-informative prior is derived from the square root of the determinant of the Fisher information matrix. This method has gained extensive adoption for its property of remaining invariant under bijective transformations of the parameter space. In the context of the Rice distribution, the computation involves taking the square root of the determinant of $I(\alpha,\eta)$, leading to
\begin{equation}\label{priorjnk}
\pi_2\left(\alpha,\eta\right)\propto \frac{\sqrt{(\rho+1)\Psi(\rho)-\rho}}{\alpha^2}.
\end{equation}

The joint posterior distribution for $\alpha$ and $\eta$, produced by the Jeffreys prior, is proportional to the product of the likelihood function (\ref{verori}) and the prior (\ref{priorjnk}) resulting in,
\begin{equation}\label{postjnk1} 
\begin{aligned}
p_2(\alpha,\eta|\boldsymbol{x})\propto\frac{\sqrt{(\rho+1)\Psi(\rho)-\rho}}{\alpha^{2n+2}}\prod_{i=1}^{n}\left(x_iI_0\left( \frac{\eta x_i}{\alpha^2} \right)\right) \exp\left( -\sum_{i=1}^{n}\frac{x_i^2 + \eta^2}{2\alpha^2} \right). 
\end{aligned}
\end{equation}

\begin{proposition}\label{prop2} This posterior is proper for $n> 2$ as long as not all $x_i$ are equal.
\end{proposition}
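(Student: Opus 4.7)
The plan is to apply item (ii) of Theorem~\ref{theoprinc} by dominating $\pi_2$ above by a separable prior of the required form and then transferring propriety. Writing $\rho=\eta^2/\alpha^2$ and $h(\rho):=\sqrt{(\rho+1)\Psi(\rho)-\rho}$, the Jeffreys prior (\ref{priorjnk}) reads $\pi_2(\alpha,\eta)=h(\rho)/\alpha^2$, and the whole argument reduces to exhibiting a constant $C<\infty$ with $h(\rho)\le C$ for every $\rho>0$. Any such bound immediately yields $\pi_2(\alpha,\eta)\le C\alpha^{-2}\cdot 1$, a separable majorant with $\pi(\alpha)\propto \alpha^{-2}$ (so $k=-2$) and $\pi(\eta)\equiv 1$ (so $r_0=r_\infty=0$).

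To produce the uniform bound on $h$, I would establish that $\Psi(\rho)\le 1$ for every $\rho>0$. The cleanest argument is a data-processing/monotonicity statement for Fisher information: the modulus $X=\sqrt{M_1^2+M_2^2}$ is a deterministic function of the complex Gaussian observation $(M_1,M_2)$, whose Fisher information for the location parameter $\eta$ equals the Gaussian value $1/\alpha^2$; since $\Psi(\rho)/\alpha^2$ is exactly the $(2,2)$-entry of the Fisher information matrix (\ref{mfishernk}) associated with $X$, the discarded phase information forces $\Psi(\rho)\le 1$. Alternatively, $1-\Psi(\rho)\ge 0$ can be derived directly from the integral representation (\ref{phifunction}) via an estimate on the Bessel ratio $I_1^2/I_0$. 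Either route gives $(\rho+1)\Psi(\rho)-\rho\le (\rho+1)-\rho=1$, hence $h(\rho)\le 1$, so $\pi_2\le \widetilde\pi$ pointwise with $\widetilde\pi(\alpha,\eta):=\alpha^{-2}$.

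With $\widetilde\pi$ in hand, item (ii) of Theorem~\ref{theoprinc} applies with $k=-2$ and $r_0=r_\infty=0$; since $r_0=0>-1$, the posterior associated with $\widetilde\pi$ is proper for every
\[
n>\max\bigl(2(r_\infty+1),\,k+1\bigr)=\max(2,-1)=2,
\]
provided not all $x_i$ are equal. Because $\pi_2\le\widetilde\pi$ pointwise and the likelihood is nonnegative,
\[
d_2(\boldsymbol{x})=\int_{\mathcal{A}}\pi_2(\alpha,\eta)\,L(\alpha,\eta\mid\boldsymbol{x})\,d\alpha\,d\eta\;\le\;\int_{\mathcal{A}}\widetilde\pi(\alpha,\eta)\,L(\alpha,\eta\mid\boldsymbol{x})\,d\alpha\,d\eta<\infty
\]
for all $n>2$, which establishes propriety of the posterior~(\ref{postjnk1}). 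The principal obstacle in this plan is Step~2: the bound $\Psi(\rho)\le 1$. The data-processing argument is conceptually transparent but rests on the classical monotonicity of Fisher information under deterministic transformations, while a fully self-contained derivation requires a delicate inequality for the Bessel ratio $I_1^2/I_0$ inside~(\ref{phifunction}).
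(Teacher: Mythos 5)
Your proposal is correct and follows essentially the same route as the paper: establish $\Psi(\rho)\le 1$, deduce $\sqrt{(\rho+1)\Psi(\rho)-\rho}\le 1$ so that $\pi_2(\alpha,\eta)\le\widetilde\pi(\alpha,\eta)=\alpha^{-2}$, and then apply item (ii) of Theorem~\ref{theoprinc} with $k=-2$, $r_0=r_\infty=0$ to obtain propriety for all $n>\max(2(r_\infty+1),k+1)=2$, transferring the bound to $\pi_2$ by monotonicity of the integral. The only divergence is in justifying $\Psi(\rho)\le 1$: the paper simply cites Idier and Collewet \cite{idier2014properties} (that $\Psi$ is strictly increasing with $\Psi(\rho)\to 1$ as $\rho\to\infty$), whereas your data-processing argument for Fisher information --- the $(2,2)$ entry $\Psi(\rho)/\alpha^2$ for the modulus $X$ cannot exceed the value $1/\alpha^2$ for the underlying Gaussian pair $(M_1,M_2)$ --- is a valid, self-contained alternative justification of the same key bound.
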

\begin{proof} From Idier \cite{idier2014properties}, $\Psi(\rho)$ is strictly increasing in $\rho$ and $\Psi(\rho)\to 1$ as $\rho\to \infty$ from which in special it follows that $\Psi(\rho)\leq 1$ for all $\rho>0$ and thus
\begin{equation*}
\sqrt{(\rho+1)\Psi(\rho)-\rho}\leq \sqrt{(\rho+1) - \rho} = 1\mbox{ for all }\rho >0.
\end{equation*}
Thus $\pi_2(\alpha,\eta)\leq \pi(\alpha,\eta)$ for $\pi(\alpha,\eta)=\frac{1}{\alpha^2}$. But applying Theorem \ref{theoprinc} with $r_0=r_\infty=0$ and $k=-2$ we conclude $\pi(\alpha,\eta)$ leads to a proper joint posterior for all
\begin{equation*}n>\max(2(r_\infty+1),k+1)=2
\end{equation*}
as not all $x_i$ are equal. And since $\pi_2(\alpha,\eta)\leq \pi(\alpha,\eta)$, the same follows for the joint posterior of $\pi_2(\alpha,\eta)$.
\end{proof}

\subsection{Sampling from the posterior}\label{secmetrol}

Here, we outline the Monte Carlo Markov chain algorithm for sampling from the joint posterior distribution.  To generate samples of $\eta$ and $\alpha$ from the marginal posterior distribution, the Metropolis-Hastings (MH) algorithm is required since the marginal distributions do not have closed-form expressions. Therefore, in order to obtain samples from the marginal distributions, we can use the conditional distribution given by 
\begin{equation}\label{cond1}
p_2(\eta|\alpha,\boldsymbol{x})\propto \sqrt{(\rho+1)\Psi(\rho)-\rho}\prod_{i=1}^{n}I_0\left( \frac{\eta x_i}{\alpha^2} \right) \exp\left( -\sum_{i=1}^{n}\frac{x_i^2 + \eta^2}{2\alpha^2} \right),\end{equation}

\begin{equation}\label{cond2}
p_2(\alpha|\eta,\boldsymbol{x})\propto\frac{\sqrt{(\rho+1)\Psi(\rho)-\rho}}{\alpha^{2n+2}}\prod_{i=1}^{n}I_0\left( \frac{\eta x_i}{\alpha^2} \right) \exp\left( -\sum_{i=1}^{n}\frac{x_i^2 + \eta^2}{2\alpha^2} \right).\end{equation}

In this study, we adopt the Gamma distribution $q(\alpha^{(*)}|\alpha^{(j)},k)$ y $q(\eta^{(*)}|\eta^{(j)},d)$ as a proposal distribution to sample values of the parameters $\alpha$ and $\eta$, respectively, where $d$ and $k$ are hyperparameters that influence the convergence rate of the algorithm. It is important to note that alternative proposal distributions can be utilized in place of the Gamma model, such as any model that generates values in the positive real line. The subsequent steps detail the execution of the MH algorithm:

\begin{enumerate}
\item Compute the initial values of $\eta^{(1)}$ and $\alpha^{(1)}$ from (\ref{mme1}) and (\ref{mme2}) and initialize a counter $j=1$;
\item Generate a random number $\eta^{(*)}$ from the $Gamma(\eta^{(j)}, d)$ distribution;
\item Calculate the acceptance probability, defined as:
\begin{equation*}
\Delta\left(\eta^{(j)},\eta^{(*)}\right)=\min\left(1, \frac{\pi\left(\eta^{(*)}|\alpha^{(j)},\boldsymbol{x}\right)}{\pi\left(\eta^{(j)}|\alpha^{(j)},\boldsymbol{x}\right)} \frac{\f{q}\left(\eta^{(j)}|\eta^{(*)},d\right)}{\f{q}\left(\eta^{(*)}|\eta^{(j)},d\right)}\right),
\end{equation*}
where $\pi(\cdot)$ denotes the conditional posterior distribution of $\eta$ given in (\ref{cond1}). Draw a random sample from an independent uniform distribution $u$ in the interval (0,1);
\item If $\Delta\left(\eta^{(j)},\eta^{(*)}\right)\geq u(0,1)$, accept the value $\eta^{(*)}$ and set $\eta^{(j+1)}=\eta^{(*)}$. If $\Delta\left(\eta^{(j)},\eta^{(*)}\right)< u(0,1)$, reject the value and set $\eta^{(j+1)}=\eta^{(j)}$;

\item Generate a random number $\alpha^{(*)}$ from the $Gamma(\alpha^{(j)}, k)$ distribution;
\item Calculate the acceptance probability, defined as:
\begin{equation*}
\Delta\left(\alpha^{(j)},\alpha^{(*)}\right)=\min\left(1, \frac{p\left(\alpha^{(*)}|\eta^{(j+1)},\boldsymbol{x}\right)}{p\left(\alpha^{(j)}|\eta^{(j+1)},\boldsymbol{x}\right)} \frac{\f{q}\left(\alpha^{(j)}|\alpha^{(*)},k\right)}{\f{q}\left(\alpha^{(*)}|\alpha^{(j)},k\right)}\right),
\end{equation*}
where $p(\cdot)$ is the conditional posterior $\alpha$ distribution given by (\ref{cond2}). Draw a random sample from an independent uniform distribution $u$ in the interval (0,1);
\item If $\Delta\left(\alpha^{(j)},\alpha^{(*)}\right)\geq u(0,1)$, accept the value $\alpha^{(*)}$ and set $\alpha^{(j+1)}=\alpha^{(*)}$. If $\Delta\left(\alpha^{(j)},\alpha^{(*)}\right)< u(0,1)$, reject the value and set $\alpha^{(j+1)}=\alpha^{(j)}$;

\item Increment the counter (j) to (j+1) and repeat steps 2-7 until the chains converge.
\end{enumerate}

\section{Simulation Study}

In this section, we employ Monte Carlo methods to conduct a simulation study to compare the impact of the non-informative priors on posterior distributions, with the aim of identifying the most efficient estimation method when compared with the moment estimator (MM) and the MLE (see the supplemental material to obtain the expressions). This is achieved by computing the Bias and mean square errors (MSE) as defined below:
\begin{equation*}
\f{Bias}{\theta_j}=\frac{1}{N}\sum_{i=1}^{N}(\hat\theta_{i,j}-\theta_j) \ \ \mbox{ and } \ \ \f{MSE}{\theta{j}}=\sum_{i=1}^{N}\frac{(\hat\theta_{i,j}-\theta_{j})^2}{N}, \quad j=1,2\end{equation*}
where $\boldsymbol{\theta}=(\alpha,\eta)$ and $N=5,000$ represent the number of samples generated for each $n$ and are used to obtain the estimates through the different methods to validate our results. The evaluation includes the 95\% coverage probability ($CP_{95\%}$) of credibility intervals (CI) and asymptotic confidence intervals for $\alpha$ and $\eta$. The optimal estimators, under this approach, exhibit both Bias and MSE closer to zero. Moreover, for a significant number of experiments at a $95\%$ confidence level, the frequencies of intervals covering the true values of $\boldsymbol{\theta}$ should approach $95\%$. We use the Bayes estimates based on the posterior mean of $\alpha$ and posterior median of $\eta$ due to the asymmetry of the posterior distributions.

The results sampling technique to obtain the samples from the posterior was generated using R software. The code is available on GitHub in the code availability section to ensure reproducibility. The details describing how to sample from the posterior and the diagnostic analysis are available jointly with the codes and are standard in Bayesian applications as are exactly the same as assumed in the simulation section in Ramos et al. \cite{ramos2018bayesian}. Considering $n=(10,15,\ldots,60)$, the outcomes are presented only for $\boldsymbol\theta=(6,2)$ due to space constraints. However, similar results are obtained for other choices of $\alpha$ and $\eta$. Figures \ref{graf22}-\ref{graf24} displays, respectively, the Bias, MSEs, and $CP_{95\%}$ from different values of $n$.
\begin{figure}[!h]
\centering
\includegraphics[scale=0.5]{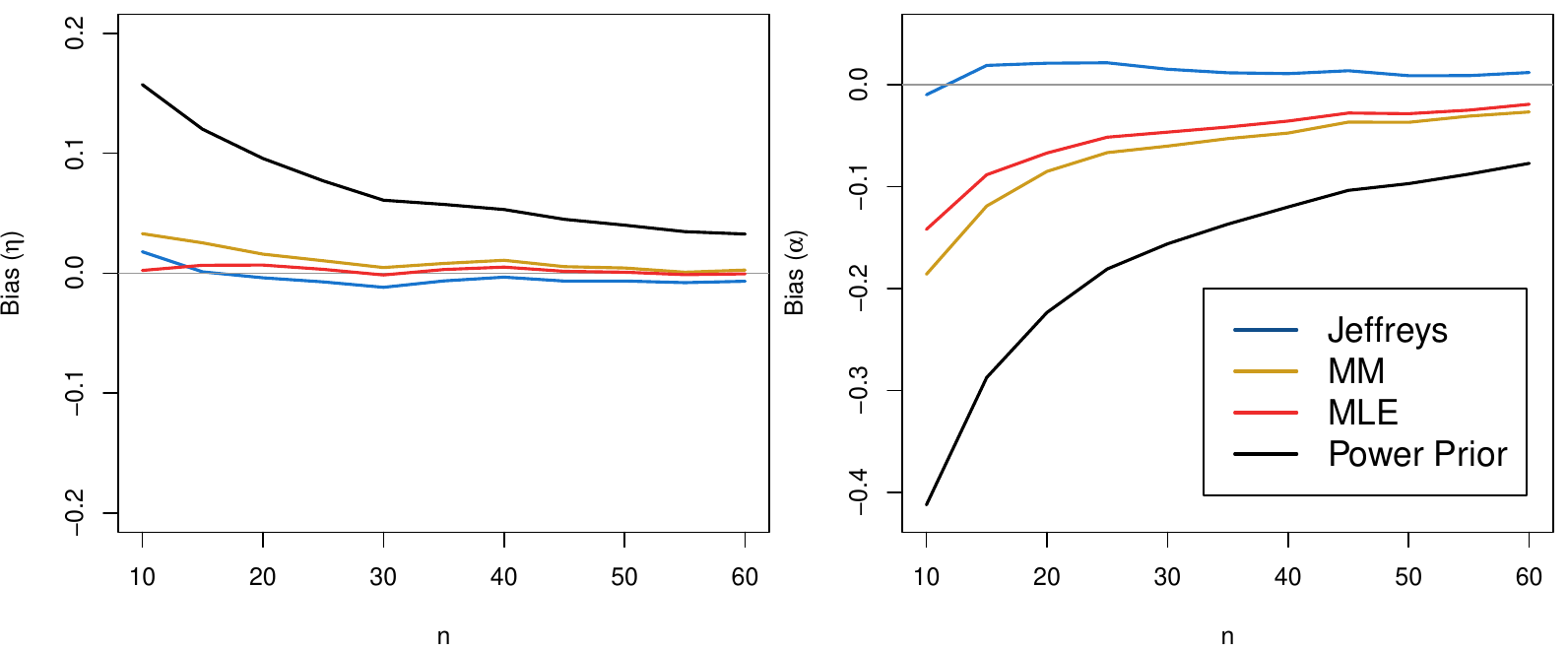}
\caption{Bias, for the estimates of $\eta =6$ and $\alpha = 2$, for $R=10,000$ simulated samples of size $n$, and using the MM, MLE, and the Bayes estimators.}
\label{graf22}
\end{figure}

As can be seen in Figure \ref{graf22}, the Bayesian approach using the prior proposed by Lauwers et al. \cite{lauwers2009estimating} returned the highest bias among the methods. On the other hand, our Bayesian approach using Jeffreys prior demonstrated significantly lower bias compared to MM and MLE. This suggests that our method can provide nearly unbiased estimates for sample sizes equal to or greater than 10, which is particularly beneficial for applications requiring high precision.

\begin{figure}[!h]
\centering
\includegraphics[scale=0.5]{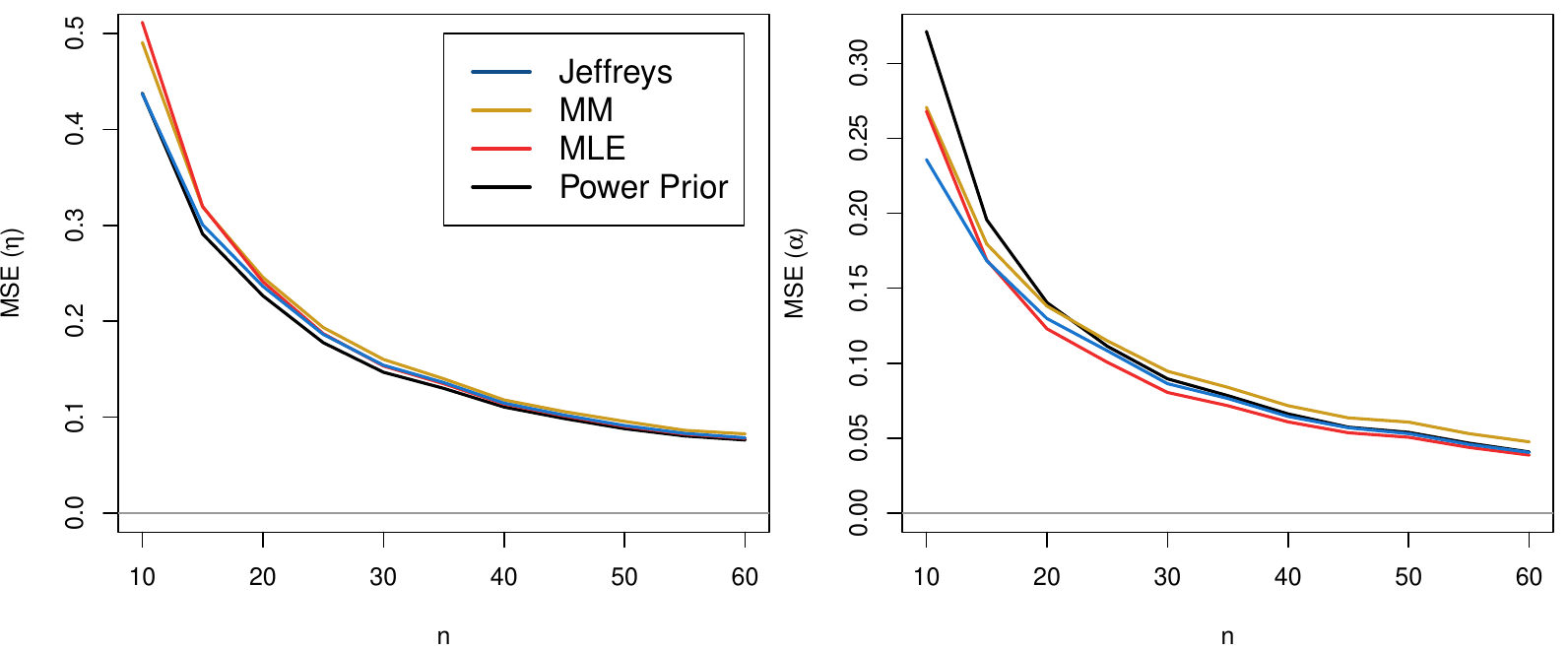}
\caption{MSE for the estimates of $\eta =6$ and $\alpha = 2$, for $R=10,000$ simulated samples of size $n$, and using the MM, MLE, and the Bayes estimators.}
\label{graf23}
\end{figure}

Figure \ref{graf23} illustrates the MSE results, where our Bayesian estimates consistently exhibit lower MSE than those obtained from other methods. This reduced error enhances the reliability of the estimations.
\begin{figure}[!h]
\centering
\includegraphics[scale=0.5]{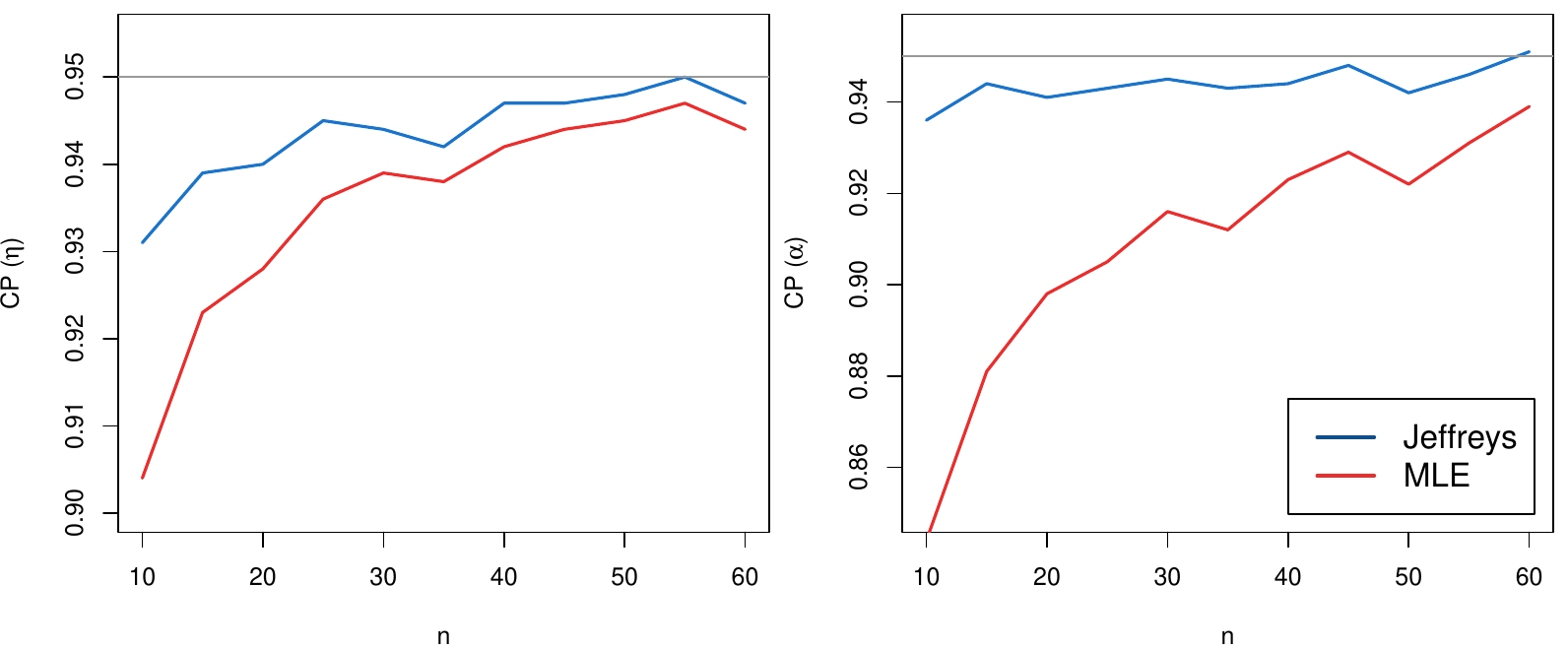}
\caption{CP for the estimates of $\eta =6$ and $\alpha = 2$, for $N=10,000$ simulated samples of size $n$, and using the MM, MLE, and the Bayes estimators.}
\label{graf24}
\end{figure}

In Figure \ref{graf24}, the coverage probabilities provided by the objective Bayesian approach are closer to the ideal 95\%, especially for smaller sample sizes. This is observed as our Bayesian approach does not require asymptotic properties (as the MLEs do) to construct the confidence/credibility intervals. The results directly come from the posterior distribution (without any normality assumption) and can be computed as long as the posterior is proper, i.e., \(n > 2\).

Overall, the simulation returned from the objective Bayesian approach with the Jeffreys prior returned accurate results for all metrics cited above and should be used to obtain accurate estimates for the parameters even for small sample sizes.

\section{Posterior Predictive Distribution}

To make predictions about new observations using our Bayesian framework, we derive the posterior predictive distribution based on the joint posterior distribution of the parameters \(\alpha\) and \(\eta\). Given the observed data \(\boldsymbol{x} = (x_1, x_2, \dots, x_n)\), the posterior predictive distribution for a new observation \(y_{\text{new}}\) is given by:
\begin{equation}\label{post_pred}
p(y_{\text{new}} \mid \boldsymbol{x}) = \int_0^\infty \int_0^\infty p(y_{\text{new}} \mid \alpha, \eta) \, p_2(\alpha, \eta \mid \boldsymbol{x}) \, d\alpha \, d\eta,
\end{equation}
where \(p(y_{\text{new}} \mid \alpha, \eta)\) is the likelihood function of the Rician distribution evaluated at \(y_{\text{new}}\), and \(p_2(\alpha, \eta \mid \boldsymbol{x})\) is the joint posterior distribution obtained using the Jeffreys prior.

Computing the integral in (\ref{post_pred}) analytically is intractable due to the complexity introduced by the Bessel function and the form of the posterior distribution. Therefore, we utilize the samples \(\{ (\alpha^{(j)}, \eta^{(j)}) \}_{j=1}^{N}\) generated from the joint posterior distribution via the Metropolis-Hastings algorithm outlined in Section \ref{secmetrol}. For each sample \((\alpha^{(j)}, \eta^{(j)})\), we generate a corresponding predictive sample \(y_{\text{new}}^{(j)}\) by drawing from the Rician distribution:
\begin{equation}\label{predictive_sample}
y_{\text{new}}^{(j)} \sim \text{Rice}(\alpha^{(j)}, \eta^{(j)}).
\end{equation}

The collection \(\{ y_{\text{new}}^{(j)} \}_{j=1}^{N}\) constitutes an empirical approximation of the posterior predictive distribution. From this set, we compute predictive value and its credible intervals:
\begin{equation}\label{predictive_mean}
\hat{y}_{\text{mean}} = \frac{1}{N} \sum_{j=1}^{N} y_{\text{new}}^{(j)},
\end{equation}
and the \((1 - \gamma)\times100\%\) credible interval is obtained from the \(\gamma/2\) and \(1 - \gamma/2\) quantiles of the \(y_{\text{new}}^{(j)}\) samples.

By utilizing the predictive samples, we account for the uncertainty in the parameter estimates \(\alpha\) and \(\eta\) when making predictions about new data points. This Bayesian predictive approach provides a probabilistic framework that naturally incorporates parameter uncertainty into the predictions.

In practice, after obtaining the MCMC samples \(\{ (\alpha^{(j)}, \eta^{(j)}) \}\), the predictive sampling proceeds as follows:

\begin{enumerate}
    \setcounter{enumi}{7}
    
    \item Generate a predictive sample \( y_{\text{new}}^{(j)} \) from the Rician distribution with parameters \(\alpha^{(j)}\) and \(\eta^{(j)}\).
    \item  Increment the counter (j) to (j+1) and repeat steps 2-8 until the chains converge.
\end{enumerate}

Predictive modeling using Bayesian frameworks is advancing applications in areas like wireless communications, finance \cite{giudici2023safe}, and healthcare. However, complex models can suffer from issues like lack of explainability, robustness, and potential biases, leading to unreliable predictions and unfair outcomes. To address these challenges, the S.A.F.E. principles—Sustainable, Accurate, Fair, and Explainable—have been established \cite{giudici2024safe}. These principles ensure that predictive models are not only precise but also robust to uncertainties, equitable across different data groups, and transparent in their methodology.

Implementing the S.A.F.E. principles in predictive analysis involves developing consistent statistical metrics to assess model compliance \cite{babaei2025rank}. For example, in our Bayesian estimation of the Rician distribution for wireless communication systems, using non-informative priors like the Jeffreys prior results in proper posterior distributions and reliable parameter estimates. By generating posterior predictive distributions through methods like the Metropolis-Hastings algorithm, researchers can account for parameter uncertainty, enhancing the model's accuracy and explainability. Such Bayesian approaches align with the S.A.F.E. framework by ensuring that predictive models are robust, fair, and transparent, thereby improving their practical applicability and trustworthiness in critical domains.

\newpage

\section{Application}

Outage probability in wireless communications refers to the probability that the signal-to-noise ratio (SNR) drops below a critical threshold, resulting in a failure to maintain reliable communication. This concept is fundamental in assessing the performance of wireless systems, especially in environments where signal fading plays a significant role.The outage probability \(P_{\text{out}}\) for a Rician-faded signal is given by the following integral:

\[
P_{\text{out}} = \int_0^{\gamma_{\text{th}}} f_{\text{Rician}}(\gamma; \eta, \alpha) \, d\gamma
\]
where \( \gamma_{\text{th}} \) is the SNR threshold below which the communication is considered to be in outage. The function \( f_{\text{Rician}}(\gamma; \eta, \alpha) \) represents the probability density function of the Rician distribution, already presented in the introduction. Table \ref{tablem2} presents a sample of signal noise data received from a wireless system, modeled using a Rice distribution. The true parameter values for this distribution are set to $\eta = 5$ and $\alpha = 2$ while \( n = 35 \), reflecting typical characteristics of signal fading in wireless communications.

\begin{table}[!h]
\centering
\caption{A sample of signal noise received from a wireless system with a Rice distribution, where the true values are $\eta = 5$ and $\alpha = 2$. }
\begin{tabular}{c|c|c|c|c|c|c}
\hline
2.3860 & 2.7988 & 2.8369 & 2.9939 & 3.7689 & 3.7791 & 4.8047 \\ \hline
4.8163 & 4.9192 & 5.0718 & 5.3107 & 5.4689 & 5.5971 & 5.9371 \\ \hline
6.4130 & 6.6319 & 6.6981 & 6.7382 & 6.8516 & 7.0193 & 7.0738 \\ \hline
7.0869 & 7.3947 & 7.4064 & 7.5912 & 7.6672 & 7.8145 & 7.8603 \\ \hline
8.0524 & 8.1097 & 8.2362 & 8.2652 & 8.8715 & 9.0293 & 9.5223 \\ \hline
\end{tabular}\label{table3}

\end{table}

\begin{figure}[!h]
\centering
\includegraphics[scale=0.8]{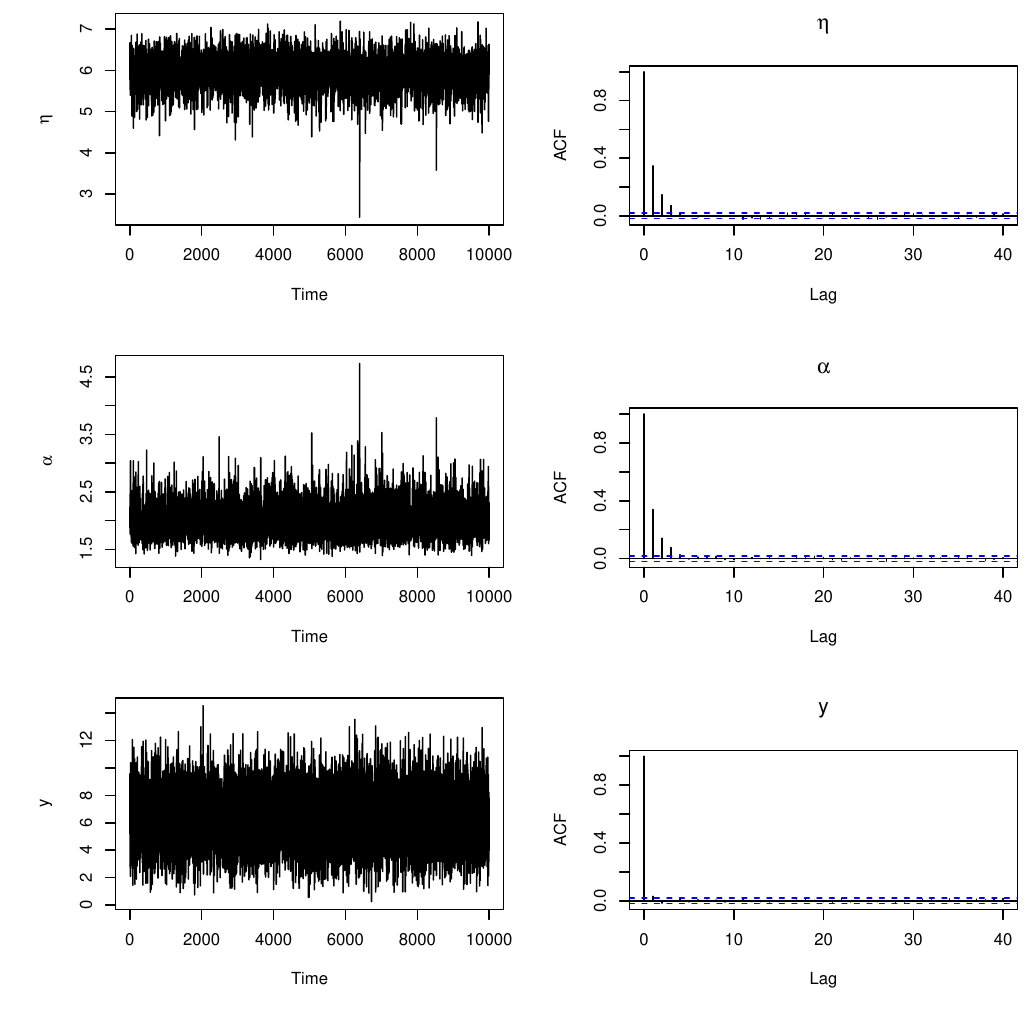}
\caption{ Left panel, the trace for the estimated parameters via MCMC algorithm. Right panel, the autocorrelation function plot for the parameter.}
\label{graf3}
\end{figure}

 Using the function available in the Code Availability section, we generated a chain of 50,500 samples, discarding the initial 500 as burn-in. We applied thinning with a factor of 5, resulting in two chains of 10,000 samples each. Figure \ref{graf3} displays the convergence plots for the estimated parameters, including the trace plots obtained through the MCMC algorithm and the autocorrelation function plots for selected parameters.

Additionally, we applied the Geweke diagnostic to verify the convergence of all the chains. Table \ref{tablem2} provides the 95\% credible intervals (equal-tailed) for each estimate, along with the Bayesian estimates of the parameters.
\begin{table}[!ht]
\caption{Bayes estimates, standard deviations and $95\%$ credible intervals for $\eta,\alpha$ and $\alpha$ from the data.}
\centering 
\begin{center}
  \begin{tabular}{ c | c | c | c}
    \hline
		$\boldsymbol{\theta}$  & Bayes & SD & CI$_{95\%}(\boldsymbol{\theta})$ \\ \hline
    \ \ $\eta$ \ \   & 5.961 & 0.379  & (5.167; 6.673) \\ \hline
    \ \ $\alpha$   \ \  & 2.004 & 0.275  & (1.562; 2.626) \\ \hline
     \ \ $y_{new}$   \ \  & 6.333 & 1.954  & (2.522; 10.255) \\ \hline
  \end{tabular}\label{tablem2}
\end{center}
\end{table}

In wireless communications, this outage probability describes the likelihood that the signal-to-noise ratio (SNR) will drop below a critical threshold, leading to communication failure. The Rician distribution is particularly effective in environments where both a line-of-sight (LOS) path and multiple scattered signals are present. characteristics.

By computing the integral up to \( \gamma_{\text{th}} \), we estimate the outage probability for different SNR thresholds. Comparing the true outage with our fitted estimates and their corresponding credibility intervals demonstrates the precision and robustness of our approach, validating its effectiveness for predicting the performance of wireless systems under varying fading conditions.
As an application in wireless communication, in Figure \ref{graf3} we present the estimate of the outage probability using the Rice distribution for $\eta = 6,$ $\alpha = 2$, and $n = 35$, under different SNR thresholds. Here, we compare the true outage with the fit obtained through our approach and its respective credibility interval, demonstrating high precision in the estimation.
\begin{figure}[!h]
\centering
\includegraphics[scale=0.5]{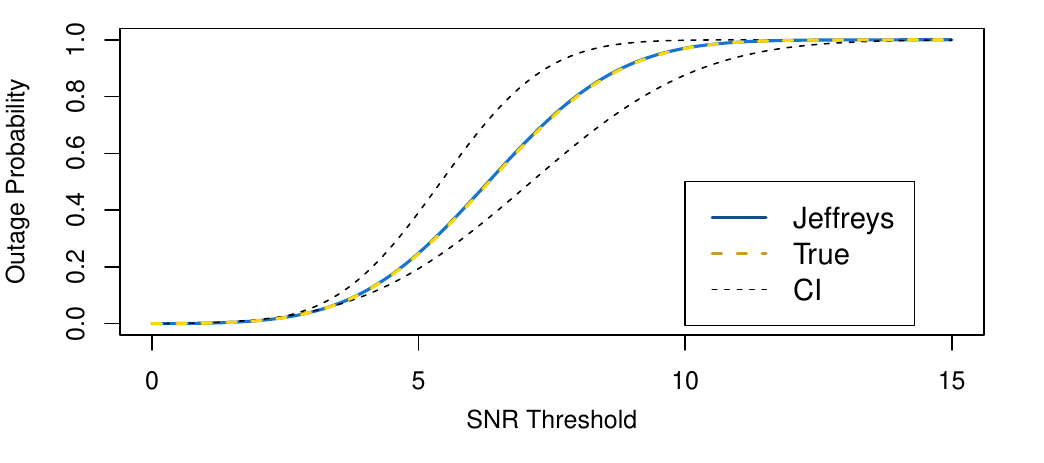}
\caption{Outage Probability vs. SNR Threshold for Rice Distribution with $\eta = 6,$ $\alpha = 2$, and $n = 35$, using Bayes estimators.}
\label{graf3}
\end{figure}

As can be seen in  Figure \ref{graf3}, conducting proper inference is crucial for accurately modeling and predicting wireless communication performance, especially when considering critical metrics like outage probability. As demonstrated by the results, the Bayesian estimates, along with the credible intervals, show a high degree of precision, validating the robustness of our approach.

\section{Conclusions}

In this paper, we investigated Bayesian estimation for the parameters of the Rician distribution using non-informative priors, specifically focusing on the Jeffreys prior. Our analysis demonstrates that the Jeffreys prior results in a proper posterior distribution, thus providing a more robust and reliable method for parameter estimation in this model. This contrasts with the power prior, which was found to be biased and less suitable for the Rician distribution. The efficacy of our approach is supported by a comprehensive simulation study, which shows that Bayesian estimates obtained with the Jeffreys prior exhibit lower bias and mean squared error (MSE) compared to classical methods such as maximum likelihood estimators (MLEs) and moment-based estimators. The Bayesian estimators under the Jeffreys prior produced nearly unbiased estimates, as indicated by bias and MSE values approaching zero.

Overall, the application of our methodology to the estimation of outage probability in wireless systems demonstrates its practical relevance. The precise Bayesian estimates obtained in this study not only highlight the efficacy of the proposed approach but also emphasize the importance of robust statistical inference in engineering applications, where reliable parameter estimation is crucial for system performance evaluation. In summary, this work contributes to the growing body of research on Bayesian methods for parameter estimation, particularly for the Rician distribution, and offers a foundation for future investigations into more complex statistical models and applications.

Our results are particularly significant given the challenges associated with deriving expressions for Fisher information in this model and, consequently, its Jeffreys prior. As the Fisher information depends on complex integrals and can not be factorized, as discussed in Bernardo \cite{bernardo2005}, reference priors were difficult to obtain. However, further investigation should be considered with the aim of deriving them. The obtained priors can be easily applied to our main theorem to confirm if it leads to proper posteriors. Additionally, future research could explore the extension of our methodology to more complex scenarios frequently encountered in practice. This includes cases with censoring, the incorporation of covariates, and analyses involving multivariate data. Investigating the adaptability of Bayesian approaches in these more intricate settings remains a promising direction for future work.

\section*{Code Availability}

The Metropolis-Hastings algorithm for sampling from the posterior distribution has been implemented in a package that is easy to use and available at: \\ \url{https://jeachire.github.io/riccib/reference/riccibo.html}


\bibliographystyle{tfs}

\begin{appendix}    

\section{Proof of Theorem \ref{theoprinc}}

Since the integrand below is always positive from the Fubbini-Tonelli Theorem (See Folland, \cite{folland}) we have
\begin{equation*}
\begin{aligned}
\int_{\mathcal{A}} p_1(\alpha,\eta|\boldsymbol{x})\, d\boldsymbol{\theta} \propto \int_{\mathcal{A}} 
 \frac{\pi(\alpha)\pi(\eta)}{ \alpha^{2n}}\prod_{i=1}^{n}\left(x_iI_0\left( \frac{\eta x_i}{\alpha^2} \right)\right) \exp\left( -\sum_{i=1}^{n}\frac{x_i^2 + \eta^2}{2\alpha^2} \right)\, d\boldsymbol{\theta}\\
= \int_0^\infty \int_0^\infty\frac{\pi(\alpha)\pi(\eta)}{\alpha^{2n}}\prod_{i=1}^{n}\left(x_iI_0\left( \frac{\eta x_i}{\alpha^2} \right)\right) \exp\left( -\sum_{i=1}^{n}\frac{x_i^2 + \eta^2}{2\alpha^2} \right)\, d\eta\, d\alpha\\
\propto \int_0^\infty \int_0^\infty\frac{\pi(\alpha)\pi(\eta)}{\alpha^{2n}}\prod_{i=1}^{n}I_0\left( \frac{\eta x_i}{\alpha^2} \right) \exp\left( -\sum_{i=1}^{n}\frac{x_i^2 + \eta^2}{2\alpha^2} \right)\, d\eta\, d\alpha.
\end{aligned}
\end{equation*}

\noindent\textbf{Proof of item $i)$} Since the first term of the series of $I_0(\eta)$ in (\ref{dens}) is $1$ it follows that $I_0(\eta) \geq 1$ for $\eta>0$. Thus
\begin{equation*}
\begin{aligned}
\int_0^\infty \int_0^\infty\frac{\pi(\alpha)\pi(\eta)}{\alpha^{2n}}\prod_{i=1}^{n}I_0\left( \frac{\eta x_i}{\alpha^2} \right) \exp\left( -\sum_{i=1}^{n}\frac{x_i^2 + \eta^2}{2\alpha^2} \right)\, d\eta\, d\alpha\\
\geq \prod_{i=1}^{n}\int_0^\infty 
 \int_0^\infty\frac{\pi(\alpha)\pi(\eta)}{ \alpha^{2n}}\exp\left( -\sum_{i=1}^{n}\frac{x_i^2 + \eta^2}{2\alpha^2} \right)\, d\eta \, d\alpha.
\end{aligned}
\end{equation*}

Now, for fixed $\alpha>0$ we have $\exp\left( -\sum_{i=1}^{n}\frac{x_i^2 + \eta^2}{2\alpha^2} \right)\underset{\eta \to 0^+}{\propto} 1$. Therefore, from Proposition \ref{proportional2}, since $\pi(\eta)\underset{\eta\to 0^+}{\propto}\eta^{r_0}$ with $r_0\leq -1$ we have for fixed $\alpha>0$ that
\begin{equation*} \int_0^1 \frac{\pi(\alpha)\pi(\eta)}{\alpha^{2n}} \exp\left( -\sum_{i=1}^{n}\frac{x_i^2 + \eta^2}{2\alpha^2} \right)\, d\eta \propto  \int_0^1 \eta^{r_0}\, d\eta = \infty.
\end{equation*}
Thus $\int_0^\infty\frac{\pi(\alpha)\pi(\eta)}{ \alpha^{2n}}\exp\left( -\sum_{i=1}^{n}\frac{x_i^2 + \eta^2}{2\alpha^2} \right)\, d\eta=\infty$ for each $\alpha>0$ and thus
\begin{equation*}
\begin{aligned} \int_{\mathcal{A}} p_1(\alpha,\eta|\boldsymbol{x})\, d\boldsymbol{\theta} \gtrsim \int_0^\infty 
 \int_0^\infty\frac{\pi(\alpha)\pi(\eta)}{ \alpha^{2n}}\exp\left( -\sum_{i=1}^{n}\frac{x_i^2 + \eta^2}{2\alpha^2} \right)\, d\eta \, d\alpha = \int_0^\infty \infty \, d\alpha = \infty.
\end{aligned}
\end{equation*}

\noindent\textbf{Proof of item $ii)$} Denoting $I_0^*\left(y\right) = e^{-y} I_0\left(y\right)$ for all $y>0$, and since $\pi(\alpha)\propto \alpha^k$ we have
\begin{equation*}
\begin{aligned}
\int_0^\infty \int_0^\infty \frac{\pi(\eta)\pi(\alpha)}{\alpha^{2n}}\prod_{i=1}^{n}I_0\left( \frac{\eta x_i}{\alpha^2} \right) \exp\left( -\sum_{i=1}^{n}\frac{x_i^2 + \eta^2}{2\alpha^2} \right)\, d\alpha\, d\eta\\
=\int_0^\infty \int_0^\infty \frac{\pi(\eta)}{\alpha^{2n-k}}\prod_{i=1}^{n}I_0^*\left( \frac{\eta x_i}{\alpha^2} \right) \exp\left( -\sum_{i=1}^{n}\frac{(x_i - \eta)^2}{2\alpha^2} \right)\, d\alpha\, d\eta.
\end{aligned}
\end{equation*}
Now, considering the change of variables  $\alpha = \sqrt{\frac{\eta}{\beta}} \Leftrightarrow d\alpha = - \frac{1}{2}\sqrt{\frac{\eta}{\beta^3}}d\beta$ we obtain:
\begin{equation*}
\begin{aligned}
&\int_0^\infty \int_0^\infty \frac{\pi(\eta)}{\alpha^{2n-k}}\prod_{i=1}^{n}I_0^*\left( \frac{\eta x_i}{\alpha^2} \right) \exp\left( -\sum_{i=1}^{n}\frac{(x_i - \eta)^2}{2\alpha^2} \right)\, d\alpha\, d\eta\\
&= \frac{1}{2}\int_0^\infty \int_0^\infty \frac{\pi(\eta)\beta^{n-\frac{(k+3)}{2}}}{\eta^{n-\frac{(k+1)}{2}}}\prod_{i=1}^{n}I_0^*\left(\beta x_i \right) \exp\left( -\beta\sum_{i=1}^{n}\frac{(x_i - \eta)^2}{2\eta} \right)\, d\beta\, d\eta\\
&= s_1 + s_2 + s_3,
\end{aligned}
\end{equation*}
where
\begin{equation*} h(\beta,\eta,\boldsymbol{x}) = \frac{\pi(\eta)\beta^{n-\frac{(k+3)}{2}}}{\eta^{n-\frac{(k+1)}{2}}}\prod_{i=1}^{n}I_0^*\left(\beta x_i \right) \exp\left( -\beta\sum_{i=1}^{n}\frac{(x_i - \eta)^2}{2\eta} \right),
\end{equation*}
and
\begin{equation*}
\begin{aligned}
s_1 =& \frac{1}{2}\int_0^\infty  
 \int_0^1 h(\beta,\eta,\boldsymbol{x}) \, d\beta\, d\eta,
\\ s_2 =& \frac{1}{2}\int_1^\infty  
 \int_1^\infty h(\beta,\eta,\boldsymbol{x}) \, d\beta\, d\eta ,\ s_3 = \frac{1}{2}\int_0^1  
 \int_1^\infty h(\beta,\eta,\boldsymbol{x}) \, d\beta\, d\eta.
\end{aligned}
\end{equation*}

Now, from Abramowitz 
and Stegun (\cite{abramowitz}, pp. 375-377) we have
\begin{equation*}
\begin{aligned}
I_0(y) \underset{y\to 0^+}{\propto} 1,\ I_0(y) \underset{y\to \infty}{\propto} \frac{e^y}{\sqrt{y}},
\end{aligned}
\end{equation*}
and thus, letting $I_0^*\left(y\right) = e^{-y} I_0\left(y\right)$, since $e^{-y} \underset{y\to 0^+}{\propto} 1$ it follows that
\begin{equation}\label{I_0_ineq}
\begin{aligned}
I_0^*(y) \underset{y\to 0^+}{\propto} 1,\ I_0^*(y) \underset{y\to \infty}{\propto} \frac{1}{\sqrt{y}}.
\end{aligned}
\end{equation}
Thus, in special $\prod_{i=1}^{n}I_0^*\left(\beta x_i \right)\underset{\beta\to\infty}{\propto} \frac{1}{\sqrt{\beta^{{n}}}}$ and we have
\begin{equation*}
\begin{aligned}
h(\beta,\eta,\boldsymbol{x}) \underset{\beta\to 0^+} \propto \frac{\pi(\eta)\beta^{n-\frac{(k+3)}{2}}}{\eta^{n-\frac{(k+1)}{2}}} \exp\left( -\beta\sum_{i=1}^{n}\frac{(x_i - \eta)^2}{2\eta} \right)\mbox{ and}
\\ h(\beta,\eta,\boldsymbol{x}) \underset{\beta\to \infty} \propto \frac{\pi(\eta)\beta^{\frac{n-k-3}{2}}}{\eta^{n-\frac{(k+1)}{2}}} \exp\left( -\beta\sum_{i=1}^{n}\frac{(x_i - \eta)^2}{2\eta} \right).
\end{aligned}
\end{equation*}
Therefore, from Proposition \ref{proportional2}, we have
\begin{equation*}
\begin{aligned}
s_1 \propto \int_0^\infty  
 \int_0^1 \frac{\pi(\eta)\beta^{n-\frac{(k+3)}{2}}}{\eta^{n-\frac{(k+1)}{2}}} \exp\left( -\beta\sum_{i=1}^{n}\frac{(x_i - \eta)^2}{2\eta} \right) \, d\beta\, d\eta\\
 \leq \int_0^\infty \int_0^\infty \frac{\pi(\eta)\beta^{n-\frac{(k+3)}{2}}}{\eta^{n-\frac{(k+1)}{2}}} \exp\left( -\beta\sum_{i=1}^{n}\frac{(x_i - \eta)^2}{2\eta} \right) \, d\beta\, d\eta,
\end{aligned}
\end{equation*}
and from the change of variables $\gamma = \beta \sum_{i=1}^{n}\frac{(x_i - \eta)^2}{2\eta}\Leftrightarrow d\beta = \frac{2\eta}{\sum_{i=1}^{n}(x_i - \eta)^2}d\gamma$ and therefore, since $n - \frac{k+1}{2}=\frac{n}{2} + \frac{n-k-1}{2}> 0$ we have
\begin{equation*}
\begin{aligned}
s_1& \lesssim \int_0^\infty  
 \int_0^\infty \frac{\pi(\eta)\eta^{n-\frac{(k+1)}{2}}\gamma^{n-\frac{(k+3)}{2}}}{(\sum_{i=1}^{n}(x_i - \eta)^2)^{n-\frac{k+1}{2}}\eta^{n-\frac{(k+1)}{2}}} \exp\left( -\gamma\right) \, d\gamma\, d\eta
 \\ &
 = \int_0^\infty 
  \frac{\pi(\eta)}{(\sum_{i=1}^{n}(x_i - \eta)^2)^{n-\frac{k+1}{2}}} \int_0^\infty \gamma^{n-\frac{(k+1)}{2}-1}\exp\left( -\gamma\right) \, d\gamma\, d\eta
  \\ & = \Gamma\left(n-\frac{k+1}{2}\right)\int_0^\infty 
  \frac{\pi(\eta)}{(\sum_{i=1}^{n}(x_i - \eta)^2)^{n-\frac{k+1}{2}}}\, d\eta\\
 & = \Gamma\left(n-\frac{k+1}{2}\right)\left(\int_0^1 
  \frac{\pi(\eta)}{(\sum_{i=1}^{n}(x_i - \eta)^2)^{n-\frac{k+1}{2}}}\, d\eta + \int_1^\infty 
  \frac{\pi(\eta)}{(\sum_{i=1}^{n}(x_i - \eta)^2)^{n-\frac{k+1}{2}}}\, d\eta \right).
\end{aligned}
\end{equation*}
But since not all $x_i$ are equal we have $\sum_{i=1}^n (x_i-\eta)^2>0$ for all $\eta>0$ and moreover
\begin{equation*}\frac{1}{\sum_{i=1}^n (x_i-\eta)^2}\underset{\eta\to 0^+} {\propto }1\mbox{ and }
\frac{1}{\sum_{i=1}^n (x_i-\eta)^2}\underset{\eta\to \infty} {\propto }\frac{1}{\eta^2}.
\end{equation*}
Thus, from Proposition \ref{proportional2}, since $r_0>-1$ and $r_\infty - 2n < \frac{n}{2}-1 - 2n <0$ we have
\begin{equation*} s_1 \lesssim \Gamma\left(n+\frac{1}{2}\right)\left(\int_0^1 \eta^{r_0} d\eta + \int_1^\infty \eta^{r_\infty-2n-1} d\eta\right) <\infty,
\end{equation*}
and thus $s_1$ is finite.
Now, on the other hand
\begin{equation*}
\begin{aligned}
s_2 \propto& \int_1^\infty 
 \int_1^\infty h(\beta,\eta,\boldsymbol{x}) \, d\beta\, d\eta \\
  \propto& \int_1^\infty  
\int_1^\infty \frac{\pi(\eta)\beta^{\frac{n-k-3}{2}}}{\eta^{n-\frac{(k+1)}{2}}} \exp\left( -\beta\sum_{i=1}^{n}\frac{(x_i - \eta)^2}{2\eta} \right)\, d\beta\, d\eta\\
\leq& \int_1^\infty  
\int_0^\infty \frac{\pi(\eta)\beta^{\frac{n-k-3}{2}}}{\eta^{n-\frac{(k+1)}{2}}} \exp\left( -\beta\sum_{i=1}^{n}\frac{(x_i - \eta)^2}{2\eta} \right)\, d\beta\, d\eta,
 \end{aligned}
\end{equation*}
and, from the change of variables $\gamma = \beta \sum_{i=1}^{n}\frac{(x_i - \eta)^2}{2\eta}$ from which $d\beta = \frac{2\eta}{\sum_{i=1}^{n}(x_i - \eta)^2}d\gamma$, and since by hypothesis $n-k-1>0$
\begin{equation*}
\begin{aligned}
s_2 &\lesssim \int_1^\infty  
 \int_0^\infty \frac{\pi(\eta)\eta^{\frac{n-k-1}{2}}\gamma^{\frac{n-k-3}{2}}}{(\sum_{i=1}^{n}(x_i - \eta)^2)^{\frac{n-k-1}{2}}\eta^{n-\frac{k+1}{2}}} \exp\left( -\gamma\right) \, d\gamma\, d\eta
 \\
 &= \int_1^\infty 
  \frac{\pi(\eta)\eta^{-\frac{n}{2}}}{(\sum_{i=1}^{n}(x_i - \eta)^2)^{\frac{n-k-1}{2}}} \int_0^\infty \gamma^{\frac{n-k-1}{2}-1}\exp\left( -\gamma\right) \, d\gamma\, d\eta
  \\ &= \Gamma\left(\frac{n-k-1}{2}\right)\int_1^\infty 
  \frac{\pi(\eta)\eta^{-\frac{n}{2}}}{(\sum_{i=1}^{n}(x_i - \eta)^2)^{\frac{n-k-1}{2}}}\, d\eta,
\end{aligned}
\end{equation*}
and, as before $\frac{1}{\sum_{i=1}^n (x_i-\eta)^2}\underset{\eta\to \infty} {\propto }\frac{1}{\eta^2}$, since $r_\infty+k-\frac{3n}{2}+2<\frac{3n}{2}-2 - \frac{3n}{2}+2 = 0$
\begin{equation*} s_2 \lesssim \Gamma\left(\frac{n+1}{2}\right) \int_1^\infty \frac{\eta^{r_\infty-\frac{n}{2}}}{\eta^{n-k-1}} d\eta = \int_1^\infty \eta^{(r_\infty+k-\frac{3n}{2}+2)-1} d\eta <\infty,
\end{equation*}
and thus $s_2$ is finite. Finally, we have from the hypothesis and the second proportionality in (\ref{I_0_ineq}) that
\begin{equation*}
\begin{aligned}
s_3 &\propto \int_0^1  
 \int_1^\infty h(\beta,\eta,\boldsymbol{x}) \, d\beta\, d\eta \\
  &\propto \int_0^1  
\int_1^\infty \frac{\beta^{n-\frac{(k+3)}{2}}}{\eta^{n-\frac{(k+1)}{2}-r_0}} \exp\left( -\beta\sum_{i=1}^{n}\frac{(x_i - \eta)^2}{2\eta} \right)\, d\beta\, d\eta,
 \end{aligned}
\end{equation*}
and, from the change of variables $\gamma = \beta \frac{\sum_{i=1}^{n}(x_i - \eta)^2}{2\eta}$ from which $d\beta = \frac{2\eta}{\sum_{i=1}^{n}(x_i - \eta)^2}d\gamma$, we have
\begin{equation*}
\begin{aligned}
s_3 &\lesssim \int_0^1 
  \frac{1}{\eta^{\frac{n}{2}-r_0}(\sum_{i=1}^{n}(x_i - \eta)^2)^{\frac{n-k-1}{2}}}\int_{\frac{\sum_{i=1}^{n}(x_i - \eta)^2}{2\eta}}^\infty \gamma^{\frac{n-k-1}{2}-1}\exp\left( -\gamma\right) \, d\gamma\, d\eta
  \\ &= \int_0^1 \frac{\Gamma\left(\frac{n-k-1}{2},\frac{\sum_{i=1}^{n}(x_i - \eta)^2}{2\eta}\right)}{\eta^{\frac{n}{2}-r_0}(\sum_{i=1}^{n}(x_i - \eta)^2)^{\frac{n-k-1}{2}}}\, d\eta =  \int_0^1 T(\eta)\, d\eta,
\end{aligned}
\end{equation*}
where here $\Gamma(s,x)=\int_x^\infty t^{s-1}e^{-t}\, dt$ is the upper incomplete gamma function and
\begin{equation*}
\begin{aligned}
T(\eta) = \frac{\Gamma\left(\frac{n-k-1}{2},\frac{\sum_{i=1}^{n}(x_i - \eta)^2}{2\eta}\right)}{\eta^{\frac{n}{2}-r_0}(\sum_{i=1}^{n}(x_i - \eta)^2)^{\frac{n-k-1}{2}}}\mbox{ for all }\eta>0\mbox{ and }T(0) = 0.
\end{aligned}
\end{equation*}

However, from Abramowitz 
and Stegun (\cite{abramowitz} p. 263) we have for all fixed $s>0$ that $\Gamma(s,x)\underset{x\to \infty}{\propto} x^{s-1}e^{-x}$ and thus since $h_2(\eta)=\frac{\sum_{i=1}^{n}(x_i - \eta)^2}{2\eta}$ is positive with 
$\lim_{\eta\to 0^+} h_2(\eta)=\infty$ it follows that
\begin{equation*} \Gamma\left(\frac{n-k-1}{2},\frac{\sum_{i=1}^{n}(x_i - \eta)^2}{2\eta}\right) \underset{\eta\to 0^+}{\propto} h_2(\eta)^{\frac{n-k-3}{2}}e^{-h_2(\eta)},
\end{equation*}
and since $\lim_{\eta\to 0^+}\frac{\eta}{h_2(\eta)^{-1}} = \frac{\sum_{i=1}^n x_i^2}{2}>0$ it follows that $\eta \underset{\eta\to 0^+}{\propto} h_2(\eta)^{-1}$ and since $(\sum_{i=1}^{n}(x_i - \eta)^2)^{\frac{n+1}{2}} \underset{\eta\to 0^+}{\propto} 1 $ we conclude that
\begin{equation*}  T(\eta) = \frac{\Gamma\left(\frac{n-k-1}{2},\sum_{i=1}^{n}\frac{(x_i - \eta)^2}{2\eta}\right)}{\eta^{\frac{n}{2}-r_0}(\sum_{i=1}^{n}(x_i - \eta)^2)^{\frac{n-k-1}{2}}}  \underset{\eta\to 0^+}{\propto} \frac{h_2(\eta)^{\frac{n-k-3}{2}}e^{-h_2(\eta)}}{h_2(\eta)^{-\frac{n}{2}+r_0}} = h_2(\eta)^{n-\frac{k+3}{2}-r_0}e^{-h_2(\eta)},
\end{equation*}
additionally, as $\lim_{\eta\to 0^+}h_2(\eta) = \infty$ it follows using change of variables $u=h_2(\eta)$ under the limit that
\begin{equation*}\lim_{\eta\to 0^+} h_2(\eta)^{n-\frac{k+3}{2}-r_0}e^{-h_2(\eta)} = \lim_{u\to \infty} u^{n-\frac{k+3}{2}-r_0}e^{-u} = 0,
\end{equation*}
and thus it follows from the proportionality above that 
\begin{equation*}
\lim_{\eta\to 0^+} T(\eta) = 0 = T(0),
\end{equation*}
that is, $T(\eta)$ is continuous in $0$. Thus, the function $T(\eta)$ is continuous in $[0,1]$ making it integrable on $[0,1]$. Consequently, we can conclude that
\begin{equation*}
\begin{aligned}
s_3 \lesssim \int_0^1 
  T(\eta)\, d\eta < \infty
\end{aligned}
\end{equation*}
which concludes the proof.

\end{appendix}

\end{document}